\newcommand{\figref}[1]{Figure~\ref{#1}}
\newcommand{\bmat}[1]{\begin{bmatrix}#1\end{bmatrix}}
\newcommand{\bmtx}{\begin{bmatrix}}
	\newcommand{\emtx}{\end{bmatrix}}
\newcommand{\bsmtx}{\left[ \begin{smallmatrix}} 
	\newcommand{\esmtx}{\end{smallmatrix} \right]}
\newcommand{\field}[1]{\mathbb{#1}}
\newcommand{\R}{\field{R}}
\newcommand{\Sm}{\field{S}}
\newcommand{\Rmnum}[1]{\expandafter\@slowromancap\romannumeral #1@}
\definecolor{mycolor1}{HTML}{367D7D}
\definecolor{mycolor2}{HTML}{D33502}
\definecolor{mycolor3}{HTML}{FAA818}
\definecolor{mycolor4}{HTML}{41A30D}
\definecolor{mycolor5}{HTML}{FFCE38}
\definecolor{mycolor6}{HTML}{6EBCBC}
\definecolor{mycolor7}{HTML}{37526D}
\colorlet{plant}{mycolor5!40!white}
\colorlet{unc}{mycolor6!40!white}
\colorlet{closedloop}{gray!20!white}
\colorlet{controller}{mycolor4!30!white}
\colorlet{dist}{mycolor3!60!white}
\colorlet{marker}{mycolor7}
\colorlet{algo}{mycolor3!60!white}
\colorlet{miscplot}{mycolor2}
\colorlet{clplot}{mycolor2}
\colorlet{filter}{magenta!20!white}
\newtheorem{theorem}{Theorem}
\newtheorem{ex}{Example}
\newtheorem{lem}{Lemma}
\newtheorem{cor}{Corollary}
\newtheorem{eremark}{Remark}
\tikzset{>=stealth'}
\title{An Efficient Algorithm to Compute Norms for Finite Horizon, Linear Time-Varying Systems}
\author{
  Jyot Buch\\
 Department of AEM\\
 University of Minnesota\\
 Minneapolis, MN 55455 \\
 \texttt{buch0271@umn.edu}\\
   \And
 Murat Arcak \\
  Department of EECS\\
  University of California\\
  Berkeley, CA 94720 \\
  \texttt{arcak@berkeley.edu} \\
  \And
 Peter Seiler\\
Department of EECS\\
University of Michigan\\
Ann Arbor, MI 48109 \\
\texttt{pseiler@umich.edu} \\
}
\begin{document}
\maketitle
\begin{abstract}
		We present an efficient algorithm to compute the induced norms of finite-horizon Linear Time-Varying (LTV) systems.  The formulation includes both induced $\mathcal{L}_2$ and terminal Euclidean norm penalties. Existing computational approaches include the power iteration and bisection of a Riccati Differential Equation (RDE). The power	iteration has low computation time per iteration but overall convergence can be slow. In contrast, the RDE condition provides guaranteed bounds on the induced gain but single RDE integration can be slow. The complementary features of these two algorithms are	combined to develop a new algorithm that is both fast and provides provable upper and lower bounds on the induced norm within the desired tolerance. The algorithm also provides a worst-case disturbance input that achieves the lower bound on the norm. We also present a new proof which shows that the power iteration for this problem converges monotonically. Finally, we show a controllability Gramian based simpler computational method for induced $\mathcal{L}_2$-to-Euclidean norm. This can be used to compute the reachable set at any time on the horizon. Numerical examples are provided to demonstrate the proposed algorithm.
\end{abstract}

\keywords{Time-Varying Systems, Linear Systems, Numerical Algorithms.}

\section{Introduction}
This paper presents an algorithm to compute induced gains for  finite
horizon linear time-varying (LTV) systems. The goal is to improve
computational speed for such analysis.  This has a practical impact for
engineered systems that follow
a finite-horizon trajectory, including space launch vehicles and
robotic manipulators, that are often modeled with nonlinear
ordinary differential equations.  An approximate LTV model is obtained
by Jacobian linearization along the trajectory and induced-gains for the
LTV model can be used for analysis and design of the feedback systems.
Moreover, worst-case disturbances from the LTV analysis can be further
studied in the nonlinear model.

Two existing approaches for computing LTV induced norms are summarized in Section~\ref{sec:existingalgo}, as the power
iteration and the bisection method. The power iteration repeatedly
integrates the dynamics of the LTV system and a related adjoint
system. These steps are coupled by an alignment condition. 
This amplifies the input in the largest gain direction so that the iterates
converges to the worst-case disturbance input. The bisection method
relies on a related matrix Riccati Differential  Equation (RDE)
condition. Upper or lower bounds on the induced gain are obtained
based on the existence of a solution for the RDE. The induced gain can
be computed within a desired tolerance via bisection. These methods
have complementary properties as discussed in Section~\ref{sec:compissue}.  Each step
of the power iteration is fast but overall convergence can be  slow.
Moreover, the power iteration only computes a lower bound with no
guarantee on the gap to the induced gain. In comparison, the RDE
bisection method
provides guaranteed upper and lower bounds. However, a single RDE
integration is slow for higher order systems. 

There are three main contributions of the paper. First, we propose a
combined algorithm that utilizes the complementary
benefits of the power iteration and RDE bisection methods (Section~\ref{sec:combAlgo}). Second, we show that the power iteration algorithm converges
monotonically to the induced system norm (Section~\ref{sec:poweritconv}).  This is a
stronger convergence result than existing results for power
iterations on Hilbert spaces~\cite{eastman2007power}. Finally, we show that a simpler
computational condition can be obtained for the special case of
induced $\mathcal{L}_2$-to-Euclidean gain (Section~\ref{sec:l2toE}) using existing time-varying Gramian results (Section $22$ of~\cite{brockett2015finite}). 

Among the most closely related work is~\cite{tierno1997numerically}, which uses the power iteration to compute induced gains of uncertain
nonlinear, time-varying systems. The restriction to LTV systems, as done in our paper,  allows us to provide a convergence proof for the power iteration  and to use the
RDE to compute provable upper bounds on the induced gain. The work
in~\cite{imae1996h} also considers computing induced norms for finite-horizon LTV
systems. The algorithm in~\cite{imae1996h} relies on an iteration using a different
RDE. Again, no convergence proof is provided. Moreover, the
computational cost of the RDE in~\cite{imae1996h} scales similarly to the RDE in
the bisection method. Other related work includes computing the
induced $\mathcal{L}_2$ gain for LTI systems on infinite horizons~\cite{boyd1989bisection,bruinsma1990fast} and
finite horizons~\cite{bamieh2003computing}. 

\vspace{0.1in}
\noindent \textbf{Notation:} Let $\R^{n \times m}$ and $\Sm^{n}$ denote the sets of $n$-by-$m$ real matrices and $n$-by-$n$ real, symmetric matrices. 
Let $\mathcal{L}_2^n[0,T]$ denote the Hilbert space of Lebesgue integrable signals $v:[0,T] \rightarrow \R^n$ with inner product $\langle v, v \rangle := \int_0^T v(t)^\top v(t) \, dt$.  The inner product defines a norm $\|v\|_{2, [0,T]} = \sqrt{\langle v, v \rangle}$.  If $\|v\|_{2, [0,T]} < \infty$ then $v \in \mathcal{L}_2^n[0,T]$.

\section{Problem Formulation}
\label{sec:prob}

Consider an LTV System $G$ defined on horizon $[0,T]$:
\begin{align}
\label{eq:LTV1} \dot{x}(t) & = A(t)\, x(t) + B(t)\, d(t) \\	
\label{eq:LTV2} e(t) & = C(t)\, x(t) + D(t)\, d(t) 
\end{align}
where $x(t) \in \R^{n_x}$ is the state, $d(t) \in \R^{n_d}$ is the external input, and
$e(t) \in \R^{n_e}$ is the error output at time $t$. The state matrices 
$A:[0,T] \rightarrow \R^{n_x \times n_x}$,
$B:[0,T] \rightarrow \R^{n_x \times n_d}$,
$C:[0,T] \rightarrow \R^{n_e \times n_x}$, and
$D:[0,T] \rightarrow \R^{n_e \times n_d}$ are piecewise-continuous (bounded) matrix valued functions of time. Explicit time dependence of the 
state matrices will be omitted when it is clear from the context. It is assumed throughout that the horizon is finite i.e. $T < \infty$. The input, output and state dimensions of $G$ are assumed to be constant throughout the horizon. The performance of $G$ is assessed in terms of an induced gain
with two components of the output:
\begin{align}
\label{eq:output}
\bmtx e_I(t) \\ e_E(t) \emtx =
\bmtx C_I(t) \\ C_E(t) \emtx \, x(t) 
+ \bmtx D_I(t) \\ 0 \emtx \, d(t) 
\end{align}
where $e_I(t) \in \R^{n_I}$ and $e_E(t) \in \R^{n_E}$ with $n_e = n_E+n_I$.
The generalized performance measure of $G:\mathcal{L}^{n_d}_{2}[0,T] \rightarrow \R^{n_E} \oplus \mathcal{L}^{n_I}_{2}[0,T]$ is then defined as:
\begin{align}
\label{eq:def}
\|G\|_{[0,T]}:=
\sup_{\stackrel{0\ne d \in \mathcal{L}_2[0,T]}{x(0)=0}}
\left[
\frac{\|e_E(T)\|_2^2 + \|e_I\|^2_{2,[0,T]}} {\|d\|^2_{2,[0,T]}} 
\right]^{\frac{1}{2}}
\end{align}
This defines an induced gain from the input $d$ to a mixture of an
$\mathcal{L}_2$ and terminal Euclidean norm on the output $e$. This norm was previously used in \cite{khargonekar1991h_,buch2020FHRobSyn} to develop extensions of robust control for finite-horizon, time-varying systems. Note
that if $n_E=0$ then there is no terminal Euclidean norm penalty on
the output. This case corresponds to the standard, finite-horizon
induced $\mathcal{L}_2$ gain of $G$. Similarly, if $n_I=0$ then there
is no $\mathcal{L}_2$ penalty on the output. This case corresponds to
a finite-horizon $\mathcal{L}_2$-to-Euclidean gain. This can be used
to bound the terminal output $e_E(T)$ reachable by an
$\mathcal{L}_2$ disturbance input. Zero feed-through from $d$ to $e_E$
ensures that Euclidean penalty is well-defined at any time $t\in[0,T]$.
The system is assumed to be initially at rest, i.e. $x(0)=0$. Non-zero initial conditions similar to~\cite{khargonekar1991h_} can also be handled with minor modifications. We are interested in efficiently computing the bound on the induced gain $\|G\|_{[0,T]}$ within a specified numerical tolerance.

\section{Existing Methods For Gain Computation}
\label{sec:existingalgo}

\subsection{Optimal Control Formulation}
The induced norm defined in Eq.~\eqref{eq:def} is related to the following optimal control problem~\cite{seiler2019finite}:
\begin{align*}
J^{*} = &\sup_d \, \frac{1}{2}\left[ e_E(T)^\top e_E(T)  + \int_0^T e_I(t)^\top e_I(t) \,dt \right] \\ 
&s.t.\nonumber \mbox{ Eq. } \eqref{eq:LTV1}, \eqref{eq:output} \mbox{ with } x(0) = 0, \|d\|_{2,[0,T]} = 1 \nonumber
\end{align*}
The optimal cost and induced norm are related as $J^*=\frac{1}{2}\|G\|_{[0,T]}^2$. The standard Euler-Lagrange optimization framework~(Section $2.5$ of~\cite{bryson1975applied}) can be used to solve this problem. The Hamiltonian is defined using the quadratic performance cost and Lagrange multipliers for constraints. A two-point boundary-value problem is obtained from the necessary conditions for optimality. This leads to two numerical algorithms for computing the induced norm of $G$: (i) the power iteration, and (ii) bisection on a related Riccati Differential Equation.  These two algorithms are briefly summarized in the following subsections.

\subsection{Power Iteration}
\label{sec:powerit}

The power iteration is an iterative method to approximate the maximum eigenvalue and singular value for matrices within numerical tolerance~\cite{golub2013matrix,demmel1997applied}. It was studied in~\cite{tierno1997numerically} for computing the gain of uncertain, nonlinear time-varying systems. A version of the power iteration can also be used to approximate the induced gain of an LTV system within numerical tolerance. The adjoint (costate) dynamics $G^\sim:\R^{n_E} \oplus \mathcal{L}^{n_I}_{2}[0,T] \rightarrow \mathcal{L}^{n_d}_{2}[0,T]$ obtained from the two-point boundary-value problem are as follows:
\begin{align}
\label{eq:adj}\dot{p}(t) &= - A(t)^\top p(t) - C_I(t)^\top q(t)\\
\label{eq:adjop}r(t) &= B(t)^\top p(t) + D_I(t)^\top q(t)
\end{align}
with boundary condition: 
\begin{align}
\label{eq:adjbc}
p(T) = C_E(T)^\top e_E(T)
\end{align}
where $p(t) \in \R^{n_x}, q(t) \in \R^{n_I}, r(t) \in \R^{n_d}$ are adjoint state, input and output respectively. The specific steps for power iteration are outlined in Algorithm~\ref{alg:power}. 
\begin{algorithm}[h]
	\linespread{1}\selectfont
	\caption{Power Iteration for LTV System} \label{alg:power}  
	\begin{algorithmic}[1]
		\State \textbf{Given:} $G$, $G^\sim$
		\State \textbf{Initialize:} $d^{(1)}$ with $\|d^{(1)}\|_{2,[0,T]} = 1$, $N$, $\epsilon_a$,  $\gamma^{(0)}=-\infty$ 
		
		\For{$i=1:N$} 
		
		\State \parbox[t]{0.9\linewidth}{\textbf{Forward Sim:} Simulate $G$ from $t=0$ to $t=T$ with $x(0)=0$ and input $d^{(i)}$ to generate $e_I^{(i)}$, $e_E^{(i)}$, and to compute forward performance $\gamma_{f}^{(i)}$ using Eq.~\eqref{eq:fwdperf}.}	
		
		\State \parbox[t]{0.9\linewidth}{\textbf{Backward Sim:} Simulate $G^\sim$ from $t=T$ to $t=0$ with $p(T) \equiv C_E(T)^\top e_E^{(i)}(T)$ and input $q^{(i)} \equiv e_I^{(i)}$ to generate $r^{(i)}$ using Eq.~\eqref{eq:adjop}.}		
		
		\State \parbox[t]{0.9\linewidth}{\textbf{Compute Gain:} $\gamma^{(i)} = \| r^{(i)} \|_{2,[0,T]}$.}		
		
		\State \textbf{Alignment Condition:} $d^{(i+1)} =  r^{(i)}/\gamma^{(i)}$.		
		
		\State \parbox[t]{0.9\linewidth}{\textbf{Stop Condition:} Terminate if $\gamma^{(i)} - \gamma^{(i-1)} < \epsilon_a$.}
		
		\EndFor
		
		\State \parbox[t]{0.9\linewidth}{\textbf{Compute Output:} Simulate $G$ using  $d_{\pi} = d^{(i+1)}$ and compute forward performance $\gamma_{\pi}$ using Eq.~\eqref{eq:fwdperf}.}
		\vspace{0.02in}	
	
		\State \textbf{Output:} $\gamma_{\pi}$, $d_{\pi}$.
	\end{algorithmic}
\end{algorithm}

Let system $G$ and its adjoint $G^\sim$ be given. The algorithm is initialized with a candidate disturbance $d^{(1)}$, maximum number of iterations $N$, initial performance $\gamma^{(0)}=-\infty$ and desired absolute tolerance $\epsilon_a$. The initial disturbance can be chosen randomly or by any other means. It is normalized so that $\|d^{(1)}\|_{2,[0,T]} = 1$. The first step in the iteration is to simulate the system $G$ forward in time from $t=0$ to $t=T$ using chosen disturbance $d^{(i)}$ and zero initial conditions. This step yields the outputs $e_I^{(i)}$ and $e_E^{(i)}$. The forward induced performance $\gamma_f^{(i)}$ can be computed for this specific unit-norm disturbance.
\begin{align}
\label{eq:fwdperf}
\gamma_{f}^{(i)} := \left[\|e_E^{(i)}(T)\|_2^2 + \|e_I^{(i)}\|^2_{2,[0,T]}\right]^{\frac{1}{2}}
\end{align}
The next step involves backward simulation of the adjoint dynamics $G^\sim$ from $t=T$ to $t=0$ using input $q^{(i)}\equiv e_I^{(i)}$ and boundary condition for $p(T) \equiv C_E(T)^\top e_E^{(i)}(T)$. This gives the adjoint output $r^{(i)}$. The gain $\gamma^{(i)}$ is given by the norm of the signal $r^{(i)}$. The disturbance for the next iteration $d^{(i+1)}$ is obtained by normalizing $r^{(i)}$. The iterations are terminated if the performance $\gamma$ fails to increase by more than the specified absolute tolerance. In addition, the iterations also stop if the number of iterations exceed specified $N$. The final outputs are induced gain $\gamma_\pi$ and the corresponding worst-case disturbance $d_\pi$. The repeated evaluations of $G$ and it's adjoint $G^\sim$ amplify the disturbance.
Moreover, the alignment condition ensures that the disturbance $d^{(i)}$ does not blow up and gets aligned along the direction of the largest gain as iteration progresses. It is known that the related matrix power iteration for singular values converges to the induced Euclidean norm of the matrix under mild technical conditions~\cite{golub2013matrix,demmel1997applied}. In Section~\ref{sec:poweritconv}, we show that the performance $\gamma^{(i)}$ is monotonically non-decreasing and power iteration for LTV system $G$ converges to the induced gain. If the iterative loop in Algorithm~\ref{alg:power} is terminated early, then $\gamma_{\pi}$ is a lower bound on the actual gain $\|G\|_{[0,T]}$. Numerical integration issues may arise on a significantly long horizon if $G$ is an unstable LTI system.

\subsection{Bisection on the Riccati Differential Equation (RDE)}
\label{sec:RDE}

Due to linear dynamics, the adjoint solution can be obtained as $p(t) = P(t) \, x(t)$ where $P(t)$ is a time-varying solution to a related RDE. The next theorem states an equivalence between a bound on the performance $\|G\|_{[0,T]}$ and the existence of a solution to a related RDE \cite{khargonekar1991h_,green2012linear,moore15,seiler2019finite}. 
\begin{theorem}
	\label{thm:nominalperf}
	Consider an LTV system \eqref{eq:LTV1} with $\gamma>0$ given. Let
	$Q : [0,T] \rightarrow \Sm^{n_x}$,
	$S : [0,T] \rightarrow \R^{n_x \times n_d}$,
	$R : [0,T] \rightarrow \Sm^{n_d}$, and $F \in \R^{n_x\times n_x}$ be
	defined as follows.
	\begin{align*}
	\begin{split}
	Q := C_I^\top C_I, \hspace{0.1in}  
	S := C_I^\top D_I, \hspace{0.1in}
	R := D_I^\top D_I-\gamma^2 I_{n_d},  \hspace{0.1in} 
	F := C_{E}(T)^\top C_{E}(T)
	\end{split}
	\end{align*}
	The following statements are equivalent:
	\begin{enumerate}
		\item $\|G\|_{[0,T]} < \gamma$
		\item $R(t) < 0$ for all $t \in [0,T]$. Moreover, there
		exists a differentiable function $P:[0,T] \rightarrow \Sm^{n_x}$ such that
		\begin{align*}
		\dot{P} +  A^\top P + PA +Q - (PB+S) R^{-1} (PB+S)^\top = 0, \hspace{0.2in} P(T)=F
		\end{align*}
		This is a Riccati Differential Equation (RDE).
	\end{enumerate}  
\end{theorem}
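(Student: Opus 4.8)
The plan is to read both implications through the quadratic cost functional
\[
J_\gamma(d) := \|e_E(T)\|_2^2 + \|e_I\|_{2,[0,T]}^2 - \gamma^2\|d\|_{2,[0,T]}^2 ,
\]
where $e_I,e_E$ are generated by $d$ through \eqref{eq:LTV1}, \eqref{eq:output} with $x(0)=0$. By definition of the norm, $\|G\|_{[0,T]}<\gamma$ is equivalent to $J_\gamma(d)<0$ for all $d\ne 0$ (in fact $J_\gamma(d)\le(\|G\|_{[0,T]}^2-\gamma^2)\|d\|_{2,[0,T]}^2$). Writing out the output, $J_\gamma(d)=x(T)^\top F x(T)+\int_0^T\!\big(x^\top Q x+2x^\top S d+d^\top R d\big)\,dt$, so this is a finite-horizon bounded-real lemma; I would prove it by the completion-of-squares / storage-function argument for $(2)\Rightarrow(1)$ and, for $(1)\Rightarrow(2)$, by constructing $P$ as a value function and using $\|G\|_{[0,T]}<\gamma$ to rule out a finite escape time of the RDE.

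For $(2)\Rightarrow(1)$: with $P$ solving the RDE and $R(t)<0$, set $V(t):=x(t)^\top P(t)x(t)$. Differentiating along \eqref{eq:LTV1} and substituting the RDE, a direct completion of squares gives
\[
\dot V + x^\top Q x + 2x^\top S d + d^\top R d = (d-d^\star)^\top R\,(d-d^\star),\qquad d^\star:=-R^{-1}(PB+S)^\top x .
\]
Integrating over $[0,T]$ with $V(0)=0$ and $V(T)=x(T)^\top F x(T)$ yields $J_\gamma(d)=\int_0^T (d-d^\star)^\top R (d-d^\star)\,dt$. Since $R$ is piecewise continuous on the compact interval and $R(t)<0$, we have $R(t)\le-\delta I$ for some $\delta>0$, so $J_\gamma(d)\le-\delta\|d-d^\star\|_{2,[0,T]}^2$. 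The map $d\mapsto d-d^\star$ is the identity plus a causal Hilbert–Schmidt (Volterra) operator and is therefore boundedly invertible, giving $\|d-d^\star\|_{2,[0,T]}\ge c\|d\|_{2,[0,T]}$; hence $J_\gamma(d)\le-\delta c^2\|d\|_{2,[0,T]}^2$ and $\|G\|_{[0,T]}<\gamma$.

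For $(1)\Rightarrow(2)$: first show $R(t)<0$ for all $t$. If $\lambda_{\max}\!\big(D_I(t_0)^\top D_I(t_0)\big)\ge\gamma^2$ at some $t_0\in[0,T)$ with unit eigenvector $v$, feed the normalized pulse $d_\epsilon=\epsilon^{-1/2}v$ on $[t_0,t_0+\epsilon]$ and zero elsewhere; the resulting state is $O(\sqrt\epsilon)$, so $J_\gamma(d_\epsilon)\to\|D_I(t_0)v\|^2-\gamma^2\ge 0$ as $\epsilon\downarrow 0$, contradicting $J_\gamma(d_\epsilon)\le\|G\|_{[0,T]}^2-\gamma^2<0$; piecewise continuity then gives $R(t)<0$ on all of $[0,T]$. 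Since $R^{-1}$ is bounded, the RDE has a unique solution with $P(T)=F$ on a maximal subinterval $(t^\star,T]$. On any $[t,T]\subset(t^\star,T]$ the same completion-of-squares identity (now from initial state $x(t)=\xi$) shows $\xi^\top P(t)\xi=\sup\{J_\gamma^{[t,T]}(d,\xi)\}$, the supremum over $d\in\mathcal L_2[t,T]$, attained at $d=d^\star$. Splitting the response into its zero-input part (driven by $\xi$) and its zero-state part (driven by $d$), and using that the zero-state gain of $G$ over every sub-horizon $[t,T]$ is bounded by $\|G\|_{[0,T]}$ (extend $d$ by zero on $[0,t]$), the elementary inequality $\|a+b\|^2\le(1+\eta)\|a\|^2+(1+\eta^{-1})\|b\|^2$ gives $J_\gamma^{[t,T]}(d,\xi)\le\big((1+\eta)\|G\|_{[0,T]}^2-\gamma^2\big)\|d\|_{2,[t,T]}^2+c_\eta\|\xi\|^2$; choosing $\eta$ small so the first coefficient is negative, $0\le\xi^\top P(t)\xi\le c_\eta\|\xi\|^2$ uniformly on $(t^\star,T]$ (the lower bound from the admissible choice $d=0$). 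Thus $\|P(t)\|$ stays bounded as $t\downarrow t^\star$, so the solution extends to all of $[0,T]$.

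The main obstacle is this last step: excluding a finite escape time of the RDE. It forces one to establish the variational characterization $\xi^\top P(t)\xi=\sup_d J_\gamma^{[t,T]}(d,\xi)$ and then to convert the single hypothesis $\|G\|_{[0,T]}<\gamma$ into a bound on the zero-state gain of $G$ that is uniform over all sub-horizons $[t,T]$, which is what yields the a priori estimate $0\le P(t)\le c_\eta I$ on the whole interval. The remaining pieces are routine once the completion-of-squares identity is in hand; the only mild subtlety in $(2)\Rightarrow(1)$ is upgrading $J_\gamma(d)\le 0$ to a strict bound, handled by the bounded invertibility of $d\mapsto d-d^\star$.
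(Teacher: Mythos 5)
The paper does not actually prove Theorem~\ref{thm:nominalperf}; it quotes it from the finite-horizon bounded-real-lemma literature \cite{khargonekar1991h_,green2012linear}, and your argument is exactly the standard proof given in those sources: completion of squares with $V=x^\top P x$ for $(2)\Rightarrow(1)$, with the bounded invertibility of the causal Volterra map $d\mapsto d-d^\star$ (equivalently, well-posedness of the closed-loop state equation $\dot x=(A-BR^{-1}(PB+S)^\top)x+Bv$) supplying the strict inequality, and for $(1)\Rightarrow(2)$ the pulse argument for $R(t)<0$ together with the value-function identity $\xi^\top P(t)\xi=\sup_d J_\gamma^{[t,T]}(d,\xi)$ and the uniform a priori bound $0\le P(t)\le c_\eta I$ to rule out finite escape. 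Your proposal is correct and complete in all essentials, so there is nothing to compare it against beyond noting that it matches the cited references' approach.
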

The nominal performance $\|G\|_{[0,T]} < \gamma$ is achieved if the
associated RDE solution exists on $[0,T]$ when integrated backward
from $P(T)=F$. The assumption $R(t)<0$ ensures $R(t)$ is invertible and hence the RDE is well-defined $\forall t\in [0,T]$.
Thus, the solution of the RDE exists on $[0,T]$ unless it grows
unbounded. The smallest bound on $\gamma$ is computed using bisection as summarized in Algorithm~\ref{alg:bisectAlgo}.
\begin{algorithm}[h]
	\linespread{1}\selectfont
	\caption{RDE Bisection Method} \label{alg:bisectAlgo}  
	\begin{algorithmic}[1]
		\State \textbf{Given:} $G$ 
		\State \textbf{Initialize:}  $\epsilon_a$, $\gamma_{lb}$, $\gamma_{ub}$ with $\gamma_{lb} \le \|G\|_{[0,T]} \le \gamma_{ub}$
		
		\While{$\gamma_{ub} - \gamma_{lb} > \epsilon_a$}
		
		\State \textbf{Bisect:}  $\gamma_{try} = 0.5\,(\gamma_{ub} + \gamma_{lb})$
		\vspace{0.02in}
		
		\State \parbox[t]{0.9\linewidth}{\textbf{Integrate RDE:} Solve the RDE with $\gamma_{try}$ backwards in time from $P(T)=F$}
		\vspace{0.02in}
		
		\State \textbf{Update:} If $P(0)<\infty$ then $\gamma_{ub} = \gamma_{try}$ else $\gamma_{lb} = \gamma_{try}$				
		
		\EndWhile
		
		\State \textbf{Output:} $\gamma_{ub}$, $\gamma_{lb}$.
	\end{algorithmic}
\end{algorithm}

This algorithm is initialized with bounds on the gain such that $\gamma_{lb} \le \|G\|_{[0,T]} \le \gamma_{ub}$.  The simplest choice for the lower bound is $\gamma_{lb} = \max_{t \in [0,T]} \bar \sigma (D_I(t))$. This can be computed (approximately) on a dense time grid. An upper bound can be found (if one is not known) by choosing increasing values of $\gamma$ until the RDE solution exists on $[0,T]$. The gain $\|G\|_{[0,T]}$ is finite and hence a finite upper bound will exist. Every bisection step involves integrating the RDE with $\gamma_{try}$ backward from $P(T)=F$. The bisection continues until the bounds are within the specified  tolerance $\epsilon_a$. The RDE solution grows unbounded for each lower bound step, i.e. it only exists on $(t^*,T]$ for some $t^*>0$.  The incomplete RDE solution for any $\gamma_{lb}$ can be used to construct a specific disturbance $d_{lb}$ such that it achieves the induced gain $\gamma_{lb}$~\cite{iannelli2019worst}. This disturbance $d_{lb}$ provides a verification that the induced gain $\|G\|_{[0,T]}$ is at least $\gamma_{lb}$. The RDE exhibits numerical integration issues on a significantly long horizon if $G$ is an unstable LTI system.

\section{Proposed Algorithm}
\label{sec:proposedalgo}

\subsection{Convergence of Power Iteration}
\label{sec:poweritconv}

The convergence properties of power iterations for finite-dimensional matrices are summarized first. The eigenvalue power iteration for a square matrix $M \in \R^{n \times n}$ involves iterations of the form: $v^{(i+1)} = M v^{(i)} / \| M v^{(i)}\|$.   Let $\lambda_1, \ldots, \lambda_n$ be the eigenvalues of $M$ ordered from largest magnitude to smallest.  If $|\lambda_1|>|\lambda_2|$ and $v^{(1)}$ is chosen randomly then $v^{(i)}$ converges (with probability $1$) to the eigenvector associated with $\lambda_1$ at a convergence rate of $\frac{|\lambda_2|}{|\lambda_1|}$ (Theorem $5.6$ of~\cite{quarteroni2010numerical}). However, the iteration may fail to converge if $|\lambda_1|=|\lambda_2|$ but $\lambda_1\neq \lambda_2$ (Theorem $2$ of~\cite{parlett1973geometric}, Section $7.3.1$ of~\cite{golub2013matrix} and related references). 

Similarly, the singular value power iteration for a matrix $M \in \R^{n \times m}$ involves iterations of the form: $v^{(i+1)} = M^\top M v^{(i)} / \| M^\top M v^{(i)}\|$. This can be viewed as an eigenvalue power iteration on the matrix $M^\top M$. However a slightly stronger convergence result is obtained because $M^\top M$ is Hermitian and positive semidefinite. Specifically,  If $v^{(1)}$ is chosen randomly then $v^{(i)}$ converges (with probability $1$) to the space spanned by the right singular vectors associated with the largest singular values~(Theorem $8.2.1$ of~\cite{golub2013matrix}). The convergence of the finite-dimensional singular value power iteration does not require the largest singular value to have multiplicity $1$ (Theorem $3.8$, Chapter $4$ of~\cite{stewart2001matrix}).

Next, consider the power iteration for the LTV systems in Algorithm~\ref{alg:power}. This involves a forward simulation with $G$ followed by a backward simulation with the adjoint $G^\sim$.  This combined action can be written as the composition $G^\sim G:\mathcal{L}^{n_d}_{2}[0,T]\rightarrow \mathcal{L}^{n_d}_{2}[0,T]$. Algorithm~\ref{alg:power} is thus analogous to the singular value power iteration but on an infinite dimensional space. To make this precise, first note that the operator $G^\sim G$ is a non-negative, self-adjoint, and bounded linear operator on the Hilbert space $\mathcal{L}^{n_d}_{2}[0,T]$. Theorem $3.1$ in~\cite{eastman2007power} provides a convergence proof for power iterations on a bounded linear operator in Hilbert space. The corollary stated below follows from this result.
\begin{cor}
	Suppose $G^\sim G$ has a single, simple dominant eigenvalue $\lambda_1$ and corresponding eigenvector $\psi_1$ with $\|\psi_1\|_{2,[0,T]}=1$. If $\langle d^{(1)}, \psi_1 \rangle \ne 0$ then $\| d^{(i)} - \psi_1\|_{2,[0,T]} \rightarrow 0$ as $i\rightarrow \infty$.
\end{cor}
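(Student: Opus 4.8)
The plan is to recognize Algorithm~\ref{alg:power} as exactly the abstract power iteration analyzed in Theorem~$3.1$ of~\cite{eastman2007power} applied to the operator $T:=G^\sim G$ on the Hilbert space $\mathcal{L}_2^{n_d}[0,T]$, and then to verify that the hypotheses of that theorem are met in our setting. First I would do the bookkeeping that identifies the iterates with the abstract iteration. The forward simulation of $G$ in Algorithm~\ref{alg:power} produces $\big(e_E^{(i)}(T),\,e_I^{(i)}\big)=G\,d^{(i)}$; the backward simulation of the adjoint system with boundary data $p(T)=C_E(T)^\top e_E^{(i)}(T)$ and input $q^{(i)}\equiv e_I^{(i)}$ produces $r^{(i)}=G^\sim\big(e_E^{(i)}(T),e_I^{(i)}\big)=G^\sim G\,d^{(i)}$; and the alignment step sets $d^{(i+1)}=r^{(i)}/\|r^{(i)}\|_{2,[0,T]}$. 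Hence $d^{(i+1)}=T d^{(i)}/\|T d^{(i)}\|_{2,[0,T]}$ with initial vector $d^{(1)}$, which is precisely the iteration in~\cite{eastman2007power}. Note $T d^{(i)}$ never vanishes under the hypothesis, since $\langle d^{(1)},\psi_1\rangle\ne0$ and $\lambda_1>0$ force a nonzero $\psi_1$-component at every step.

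Second, I would record the operator-theoretic facts the cited theorem needs. Because $A,B,C_I,C_E,D_I$ are piecewise-continuous and bounded on the finite interval $[0,T]$, the state-transition matrix and the associated convolution and terminal maps are bounded, so $G:\mathcal{L}_2^{n_d}[0,T]\to\R^{n_E}\oplus\mathcal{L}_2^{n_I}[0,T]$ is a bounded linear operator; it in fact has an integral-kernel (plus finite-rank terminal) representation on the compact square $[0,T]^2$, so $G$, and hence $T=G^\sim G$, is compact, which makes ``dominant eigenvalue'' meaningful (discrete spectrum accumulating only at $0$). A standard integration-by-parts argument against the costate dynamics confirms that the system written $G^\sim$ in Section~\ref{sec:powerit} is the Hilbert-space adjoint of $G$, i.e.\ $\langle Gd,\,y\rangle=\langle d,\,G^\sim y\rangle$; consequently $T^\sim=G^\sim(G^\sim)^\sim=T$ is self-adjoint and $\langle Td,d\rangle=\|Gd\|_{2,[0,T]}^2\ge0$, so $T\succeq0$ and $\lambda_1>0$ — exactly the structure claimed in the paragraph preceding the corollary.

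Finally I would invoke Theorem~$3.1$ of~\cite{eastman2007power}: for a bounded, self-adjoint, non-negative operator with an isolated simple dominant eigenvalue $\lambda_1$ and eigenvector $\psi_1$, the normalized power iterates converge to the one-dimensional dominant eigenspace $\mathrm{span}\{\psi_1\}$ whenever the initial vector has nonzero projection onto it, which is the assumption $\langle d^{(1)},\psi_1\rangle\ne0$. Since $\|d^{(i)}\|_{2,[0,T]}=1$ for all $i$, the only possible limits are $\pm\psi_1$; because $T\succeq0$ with $\lambda_1>0$, the sign of $\langle d^{(i)},\psi_1\rangle$ is preserved along the iteration, so fixing the sign of the unit eigenvector $\psi_1$ to agree with $d^{(1)}$ (i.e.\ $\langle d^{(1)},\psi_1\rangle>0$) yields $\|d^{(i)}-\psi_1\|_{2,[0,T]}\to0$. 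I expect the only real obstacles to be the reconciliations: confirming that the adjoint realization in Section~\ref{sec:powerit} is genuinely the Hilbert-space adjoint $G^\sim$ (the integration-by-parts/Lagrange-multiplier computation), and checking that ``single, simple dominant eigenvalue'' in the corollary delivers precisely the spectral-gap condition required by~\cite{eastman2007power} (namely $\sup(\mathrm{spec}(T)\setminus\{\lambda_1\})<\lambda_1$, which is automatic for a compact $T\succeq0$). Both are routine but load-bearing.
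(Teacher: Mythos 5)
Your proposal is correct and takes essentially the same route as the paper: the corollary is obtained directly by applying Theorem~$3.1$ of~\cite{eastman2007power} to the bounded, self-adjoint, non-negative operator $G^\sim G$, after identifying Algorithm~\ref{alg:power} with the abstract normalized power iteration. One minor caveat: your assertion that $G$ (and hence $G^\sim G$) is compact holds only when $D_I=0$ (cf.\ Lemma~\ref{lem:compactness}), but this is not load-bearing here since the existence of a single, simple dominant eigenvalue is part of the corollary's hypothesis rather than something that must be derived.
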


Next, we present a stronger convergence result for Algorithm~\ref{alg:power} which exploits the structure in singular value power iteration on $\mathcal{L}^{n_d}_{2}[0,T]$.  We do not require the assumption of a single, simple dominant eigenvalue. However, we require that $G$ has no feedthrough.  This ensures that $G^\sim G$ is a compact operator (Lemma~\ref{lem:compactness} in Appendix~\ref{sec:compactness}). The eigenvalues of $G^\sim G$ are real and eigenvectors associated with distinct eigenvalues are orthogonal (Theorem 8.11 of~\cite{young1988introduction}).  In fact, $G^\sim G$  has an orthonormal set of eigenvectors $\{ \psi_k \}_{k=1}^\infty \in \mathcal{L}^{n_d}_2[0,T]$ and eigenvalues $\{ \lambda_k\}_{k=1}^\infty \in \R^+ \cup \{0\}$. Moreover, if $d \in \mathcal{L}^{n_d}_2[0,T]$ then the operation $r = G^\sim G(d)$ can be written as:
\begin{align}
\label{eq:infsum}
r = G^\sim G(d) = \sum_{k = 1}^{\infty} \lambda_k \langle d, \psi_k \rangle \psi_k 
\end{align}	
Assume that eigenvalues are sorted in descending order and use non-negativity of $G^\sim G$ to get $\lambda_k \geq \lambda_{k+1} \geq 0$.

\begin{theorem}
	\label{thm:powerconv}
	Assume $D_I=0$ so that $G^\sim G$ is compact. Further assume that the dominant eigenvalue has multiplicity $m$ and $d^{(1)}$ satisfies $\langle d^{(1)}, \psi_k \rangle \ne 0$ for some $k\in \{1,\ldots,m\}$. Then $\gamma^{(i+1)} \ge \gamma^{(i)}$, $\forall i \geq 1$ and as $i \rightarrow \infty$ we have:
	\begin{enumerate}[label=(\alph*)]
		\item $d^{(i)} \rightarrow$ span of $\{\psi_1,\ldots,\psi_m\}$
		\item $\gamma^{(i)} \rightarrow \|G\|_{[0,T]}^2$
	\end{enumerate}	
\end{theorem}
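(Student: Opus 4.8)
The plan is to recognize Algorithm~\ref{alg:power} as the classical power iteration applied to the operator $T := G^\sim G$ on $\mathcal{L}^{n_d}_2[0,T]$ and then read everything off the spectral expansion~\eqref{eq:infsum}. First I would note that one forward/backward sweep computes $r^{(i)} = G^\sim(G d^{(i)}) = T d^{(i)}$, that the alignment step gives $d^{(i+1)} = T d^{(i)}/\|T d^{(i)}\|_{2,[0,T]}$, and hence, since the successive normalizations cancel, $d^{(i)} = T^{i-1} d^{(1)}/\|T^{i-1} d^{(1)}\|_{2,[0,T]}$ and $\gamma^{(i)} = \|T d^{(i)}\|_{2,[0,T]} = \|T^{i} d^{(1)}\|_{2,[0,T]}/\|T^{i-1} d^{(1)}\|_{2,[0,T]}$. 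Since $D_I=0$, Lemma~\ref{lem:compactness} makes $T$ compact (in addition to self-adjoint and non-negative), so its spectrum is $\lambda_1 \ge \lambda_2 \ge \cdots \ge 0$ with $\lambda_k \to 0$; in particular the dominant eigenvalue has \emph{finite} multiplicity $m$ and $\lambda_{m+1} < \lambda_1$, and $\{\psi_k\}$ may be taken to be a complete orthonormal set. I would also record the standard facts $\|G\|_{[0,T]}^2 = \|G^\sim G\| = \lambda_1$, and that (assuming $G\ne 0$, the claim being trivial otherwise) $\lambda_1>0$ together with $\langle d^{(1)},\psi_k\rangle\ne 0$ for some $k\le m$ guarantees $T^{i-1} d^{(1)}\ne 0$, so every iterate is well defined with unit norm.

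For the monotonicity $\gamma^{(i+1)}\ge\gamma^{(i)}$ I would expand $d^{(i)} = \sum_k a_k \psi_k$ with $\sum_k a_k^2 = 1$, so that $(\gamma^{(i)})^2 = \sum_k \lambda_k^2 a_k^2$ and, since $d^{(i+1)} = \sum_k (\lambda_k a_k/\gamma^{(i)})\psi_k$, $(\gamma^{(i+1)})^2 = \|T d^{(i+1)}\|_{2,[0,T]}^2 = (\gamma^{(i)})^{-2}\sum_k \lambda_k^4 a_k^2$. Thus $\gamma^{(i+1)}\ge\gamma^{(i)}$ is exactly $\sum_k \lambda_k^4 a_k^2 \ge \left(\sum_k \lambda_k^2 a_k^2\right)^2$, which is the statement that the random variable taking value $\lambda_k^2$ with probability $a_k^2$ has non-negative variance (equivalently, Cauchy--Schwarz on the vectors with entries $\lambda_k a_k$ and $a_k$). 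This holds verbatim for $i=1$ as well (even if $d^{(1)}$ has a component in $\ker T$, one only makes the inequality slacker), giving the claim for all $i\ge1$.

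For the two limits I would write $d^{(1)} = \sum_k c_k \psi_k$, set $w := \sum_{k=1}^m c_k \psi_k$ with $\mu := \|w\|_{2,[0,T]} = \big(\sum_{k=1}^m c_k^2\big)^{1/2} > 0$, and factor $T^{i-1} d^{(1)} = \lambda_1^{i-1}\big( w + \sum_{k>m} (\lambda_k/\lambda_1)^{i-1} c_k \psi_k \big)$. Because $\lambda_k/\lambda_1 < 1$ for $k>m$, the error term has squared norm $\sum_{k>m}(\lambda_k/\lambda_1)^{2(i-1)} c_k^2$, which is decreasing in $i$ and dominated by $\sum_{k>m} c_k^2 < \infty$, hence tends to $0$. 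Therefore $\lambda_1^{-(i-1)} T^{i-1} d^{(1)} \to w$, so $d^{(i)} \to w/\mu$, an element of $\mathrm{span}\{\psi_1,\ldots,\psi_m\}$, which is (a). The same factorization gives $\lambda_1^{-i}\|T^{i} d^{(1)}\|_{2,[0,T]} \to \mu$, and so $\gamma^{(i)} = \|T^{i} d^{(1)}\|_{2,[0,T]}/\|T^{i-1} d^{(1)}\|_{2,[0,T]} \to \lambda_1 = \|G\|_{[0,T]}^2$, which is (b).

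The only non-routine point is the passage to infinite dimensions: justifying that the subdominant tail of the $\ell^2$-expansion vanishes (this is precisely where compactness enters, via $\lambda_k\to0$ and the finiteness of $m$ with $\lambda_{m+1}<\lambda_1$), and confirming that $r^{(i)} = T d^{(i)}$ with $G^\sim$ the genuine Hilbert-space adjoint, so that $T$ is self-adjoint, non-negative, and $\|G\|_{[0,T]}^2 = \lambda_1$. Once these are in place the monotonicity is the one-line variance inequality and the convergence claims reduce to the geometric decay of the non-dominant modes exactly as in the finite-dimensional singular value power iteration.
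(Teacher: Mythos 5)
Your proposal is correct and follows essentially the same route as the paper's proof: expand the iterate in the orthonormal eigenbasis of the compact, self-adjoint, non-negative operator $G^\sim G$, factor out $\lambda_1^{i-1}$ so that the subdominant modes decay geometrically (giving (a) and (b) via $\|G\|_{[0,T]}^2=\lambda_1$), and obtain monotonicity from Cauchy--Schwarz. Your bookkeeping via $d^{(i)}=T^{i-1}d^{(1)}/\|T^{i-1}d^{(1)}\|_{2,[0,T]}$ and the variance phrasing of the monotonicity step are just cleaner renditions of the paper's running constant $c^{(i)}$ and its H\"{o}lder inequality with $f_k=\lambda_k^{i+1}\alpha_k$, $g_k=\lambda_k^{i-1}\alpha_k$.
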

\begin{proof} 
	The proof is given in Appendix~\ref{sec:proof2}.
\end{proof}
Theorem~\ref{thm:powerconv} added the assumption $D_I=0$ to ensure compactness of $G^\sim G$.  This yields a stronger convergence result than obtained in~\cite{eastman2007power} for the eigenvalue power iteration. Specifically, the power iteration in Algorithm~\ref{alg:power} converges even if the dominant eigenvalue is repeated.  If $D_I\neq0$ then the power iteration will still converge by result in~\cite{eastman2007power} if $\lambda_1>\lambda_2$. However, this condition cannot be verified in practice as the eigenvalues are not known. Finally, Theorem~\ref{thm:powerconv} also guarantees that the iteration will be non-decreasing. This is a useful diagnostic condition as $\gamma^{(i+1)} < \gamma^{(i)}$ indicates errors in the numerical integration. Algorithm~\ref{alg:power} can detect this anomaly, terminate the iteration, and warn the user to reduce the integration step size.

\subsection{Computational Issues}
\label{sec:compissue}

The computational time for bisection method can be approximated by the number of bisections $N_{RDE}$ multiplied by the computational time for one RDE integration $T_{RDE}$.\footnote{The RDE solution may grow unbounded for lower bound step and exist only on $(t^*,T]$ for some $t^*>0$. It takes longer to perform the integration over the entire horizon ($t^*=0$) than if the solution grows unbounded for some $t^*>0$. A more precise estimate of the computation time would account for the dependence of $T_{RDE}$ on the choice of $\gamma_{try}$ at each bisection step.} If the algorithm starts with a gap $\gamma_g := \gamma_{ub}-\gamma_{lb}$ then it takes at least $N_{RDE} = \log_2(\gamma_g/\epsilon_a)$ to achieve an absolute tolerance of $\epsilon_a$. Additional RDE integrations are required if an upper bound is not known at the start of the algorithm. Parallel computing resources can be exploited to integrate RDE on grid of $\gamma$. However, the single RDE integration time depends on many factors including the time horizon $T$, system order $n_x$, and integration solver tolerances. The cost of the RDE grows roughly linearly with $T$. The RDE is an $n_x\times n_x$ matrix differential equation and hence this integration requires solving $n_x (n_x+1)/2 \approx \mathcal{O}(n_x^2)$ scalar, nonlinear differential equations.\footnote{It is possible to solve the RDE by instead integrating a set of $2n_x \times n_x$ linear differential equations with a related Hamiltonian. This form does not seem to offer computational advantages.} Algorithm 2 guarantees on exit that $\|G\|_{[0,T]}$ has been computed within an absolute tolerance $\epsilon_a$.

The power iteration method has complementary properties. The computational time can be approximated as the number of iterations $N_{PI}$ multiplied by the time required for one iteration $T_{PI}$. A single power iteration requires the integration of $G$ followed by the integration of $G^\sim$. Both $G$ and $G^\sim$ are LTV systems of order $n_x$ and hence, $T_{PI}$ scales as $\mathcal{O}(n_x)$. It also scales linearly with the horizon $T$. The primary advantage of the power iteration is that $T_{PI}$ is typically significantly less than $T_{RDE}$. We expect the ratio $T_{RDE}/T_{PI}$ to grow with the state dimension as $\mathcal{O}(n_x)$.  Moreover, the power iteration is guaranteed to converge by Theorem~\ref{thm:powerconv}. However, the primary drawback of the power iteration is that convergence depends on the ratio $\lambda_2/\lambda_1$. This can be arbitrarily slow if this ratio is $1-\kappa$ for some $\kappa \ll 1$, i.e. $N_{PI}$ can be arbitrarily large to achieve a desired tolerance $\epsilon_a$. Moreover, the power iteration only provides lower bounds on the induced gain. Algorithm~\ref{alg:power} simply terminates if the iteration fails to make significant progress or if it reaches a maximum number of iterations. Algorithm~\ref{alg:power} provides no guarantee on the accuracy between the returned $\gamma_\pi$ and the actual value $\|G\|_{[0,T]}$ because this gap is not computed by the power iteration. The next two examples illustrate these issues. All examples are performed using MATLAB running on a desktop computer with $3$ GHz Intel core i7 processor and $16$ GB RAM. 

\begin{ex}	
	\label{ex:stateorder}
	We randomly sampled five representative SISO LTI systems for each model order $n_x = 1,10,20,\hdots, 200$. Computation times were recorded for each model to integrate a single RDE and perform one power iteration step ($G$ followed by $G^\sim$) on horizon $T=15$ sec. Each RDE integration was performed with a $\gamma>\|G\|_{[0,T]}$ so that the solution existed on $[0,T]$.  Let $T_{RDE}(n_x)$  and $T_{PI}(n_x)$ denote the computation times averaged over the five random models of state dimension $n_x$. \figref{fig:RatioOfCompTime} shows the ratio $T_{RDE}(n_x)/T_{PI}(n_x)$ along with a linear fit. As noted earlier, the costs of the RDE and power iteration scale as $\mathcal{O}(n_x^2)$ and $\mathcal{O}(n_x)$, respectively. Thus, we expect the ratio to grow as $\beta_0 + \beta_1 n_x$ for some constant $\beta_0$ and $\beta_1$. The offset and slope of the linear fit are $\beta_0 = 7.26$ and $\beta_1=0.07$. This indicates that the computational time of a single RDE integration is roughly equal to $\beta_0 + \beta_1 n_x$ power iterations.
\end{ex}

\begin{ex}
	\label{ex:powerissues}
	Let $G_1$ be an LTI system given by state matrices:
	\begin{align*}
	A=\bmtx -0.1 & 0.4 \\ -0.5 & 0 \emtx, \,\, B=\bmtx 2 \\ 0 \emtx, \,\, C_I= \bmtx 0 & 1\emtx,\,\, D_I=0
	\end{align*}
	The power iteration for induced $\mathcal{L}_2$ gain of $G_1$ (for $T=10$ sec) using $\epsilon_a = 5 \times 10^{-3}$ converged to the gain of $7.159$ in just four iterations. Next define the MIMO system $G_2$ such that $G_2 := \bsmtx G_1 & 0 \\ 0 & 0.95 G_1\esmtx$. We have $\|G_2\|_{[0,T]}=\|G_1\|_{[0,T]}$ by construction. However, the power iteration for $G_2$ converged to $7.130$ in $14$ iterations. The power iteration on $G_1$ and $G_2$ took $0.07$ sec and $0.27$ sec. \figref{fig:powerprog} shows the progress of the power iteration on these two systems.  The iteration has slow convergence for $G_2$ because the eigenvalues of $G_2^\sim G_2$ are not well separated. For comparison,  the RDE bisection returned the same bounds $[7.157,7.161]$ for both $G_1$ and $G_2$.  However, this took $11$ bisections and $3.7$ seconds of computation time.	
\end{ex}

\begin{figure}[h]
	\centering
	\includegraphics[width=0.68\linewidth]{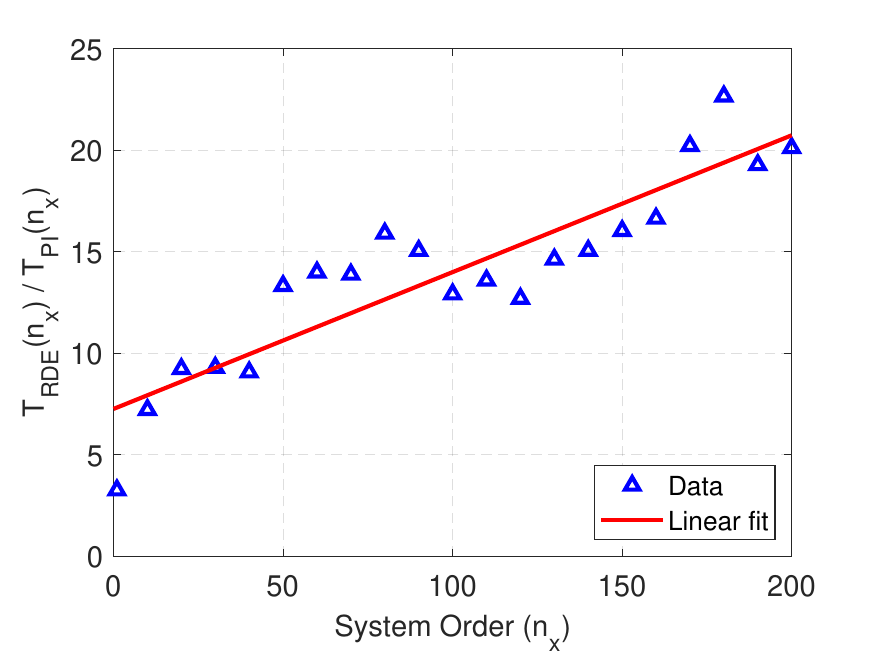}
	\caption{Ratio of Avg. Comp. Time vs System Order}
	\label{fig:RatioOfCompTime}
\end{figure}
\begin{figure}[h]
	\centering
	\includegraphics[width=0.68\linewidth]{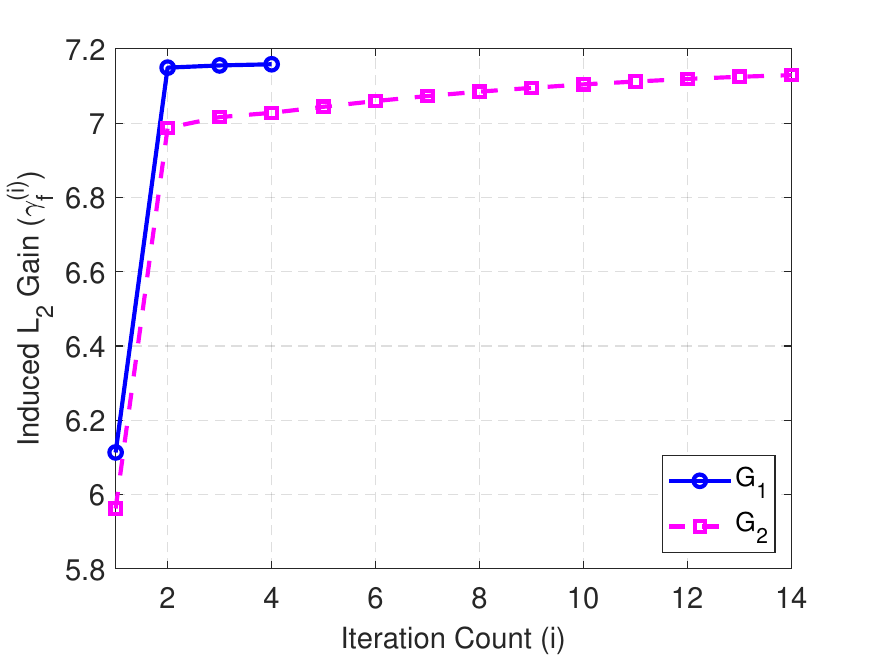}
	\caption{Power Iteration Progress}
	\label{fig:powerprog}
\end{figure}

\subsection{Combined Algorithm}
\label{sec:combAlgo}

This section presents an algorithm that combines the benefits of the power iteration and RDE bisection. The specific numerical steps are outlined in Algorithm~\ref{alg:combniedAlgo}. This algorithm is initialized with a candidate disturbance $d^{(1)}$, which is chosen randomly and normalized to have size $1$. The maximum number of power iteration $N$ is set to some high value such as $50$. The initial upper bound $\gamma_{ub} = \infty$ and lower bound $\gamma_{lb} = 0$ are fixed. The absolute tolerance $\epsilon_a$ is specified. In addition, relative tolerance $\epsilon_r$ can also be used.
\begin{algorithm}[h]
	\linespread{1}\selectfont
	\caption{Combined Algorithm} \label{alg:combniedAlgo}  
	\begin{algorithmic}[1]
		\State \textbf{Given:} $G$, $G^\sim$
		\State \textbf{Initialize:} $d^{(1)}$ with $\|d^{(1)}\|_{2,[0,T]} = 1$, $\epsilon_a$, $N$, $\gamma_{lb}$, $\gamma_{ub}$, iteration count $i=1$.
		
		\While{$\gamma_{ub} - \gamma_{lb} > \epsilon_a$}	
		
		\State \parbox[t]{0.9\linewidth}{\textbf{Run Power Iterations:}	$[\gamma_\pi$, $d_\pi]$ = Algorithm~\ref{alg:power} ($G$, $G^\sim$, $d^{(i)}$, $N$, $\epsilon_a/5$)}
		\vspace{0.02in}
		
		\State \parbox[t]{0.9\linewidth}{\textbf{Integrate RDE:} Solve RDE for $\gamma_{try} =  \gamma_\pi+\epsilon_a$.}
		\vspace{0.02in}
		
		\State \parbox[t]{0.9\linewidth}{\textbf{Update:} If $P(0)<\infty$ then $\gamma_{ub} = \gamma_{try}$, $\gamma_{lb} = \gamma_\psi$, $d_{lb} = d_\pi$ else $\gamma_{lb} = \gamma_{try}$, and construct a unit norm disturbance $d_{lb}$ using method in~\cite{iannelli2019worst} and set $d^{(i+1)} =  d_{lb}$. Increment the count $i = i + 1$.}		
		\vspace{0.02in}	
		
		\EndWhile
		
		\State \textbf{Output:} $\gamma_{lb}$, $\gamma_{ub}$, $d_{lb}$
	\end{algorithmic}
\end{algorithm}

The first step in the iterative loop is to perform power iterations with tight numerical tolerance i.e. $\epsilon_a/5$. This gives the gain 
$\gamma_\pi$ and corresponding disturbance $d_\pi$. The next step is to perform RDE integration for $\gamma_{try} = \gamma_\pi+\epsilon_a$ from respective boundary condition. Notice that this $\gamma_{try}$ is just high enough to satisfy the stopping criteria for the loop. Moreover, the tight tolerance used in power iteration increases the likelihood for existence of the RDE solution for $\gamma_{try}$. If RDE solution exists then $\gamma_{try}$, $\gamma_\pi$, $d_\pi$ are set to $\gamma_{ub}$, $\gamma_{lb}$, $d_{lb}$ respectively and we are done as no further iterations required. If RDE solution is incomplete then $\gamma_{try}$ is a better lower bound $\gamma_{lb}$ and disturbance $d_{lb}$ can be constructed from non-convergent RDE solution~\cite{iannelli2019worst}. This disturbance is set to $d^{(i+1)}$ and is used as a starting disturbance in the next iteration. Finally, the iteration count $i$ is incremented. Algorithm~\ref{alg:combniedAlgo} can be modified to switch over to the RDE bisection method, if it does not make a significant progress within some small number of RDE integration calls. This ensures computational complexity that is no worse than Algorithm~\ref{alg:bisectAlgo}. In practice, the combined algorithm: (i) is faster than RDE bisection method, and (ii) terminates with guaranteed upper and lower bounds on the induced gain within a specified numerical tolerance. Thus, it merges the benefits of RDE bisection and power iteration. 

\subsection{Special Case: $\mathcal{L}_2$-to-Euclidean Gain}
\label{sec:l2toE}

If $n_I= 0$ then the induced norm defined in Eq.~\eqref{eq:def} is a finite horizon induced $\mathcal{L}_2$-to-Euclidean gain of $G$ which is denoted by $\|G\|_{E,[0,T]}$. Note that $G:\mathcal{L}_2^{n_d}[0,T]\rightarrow\R^{n_E}$ is a finite rank operator. The next theorem presents a simpler condition to compute $\|G\|_{E,[0,\tau]}$ for any intermediate horizon $\tau\in[0,T]$. This simpler condition only requires a forward integration of the following matrix Lyapunov Differential Equation (LDE):
\begin{align}
\label{eq:LDE1}
\dot{X}(t) =  A(t) X(t) + X(t) A(t)^\top + B(t)B(t)^\top,\hspace{0.1in} X(0) = 0
\end{align}
where $X(t)\in \Sm^{n_x}$ is a state controllability Gramian of $G$ at time $t$. In comparison, the Algorithm~\ref{alg:power}, \ref{alg:bisectAlgo} or~\ref{alg:combniedAlgo} have to be run for each $\tau\in[0,T]$ to compute $\|G\|_{E,[0,\tau]}$.

\begin{theorem}	
	\label{thm:L2toE}
	Let $Y(t):=C_E(t)X(t)C_E(t)^\top$, $\forall t\in[0,T]$ be the output controllability Gramian with largest eigenvalue denoted as $\lambda_1(Y(t))$. The finite horizon induced $\mathcal{L}_2$-to-Euclidean gain of $G$ for any horizon $\tau \in [0,T]$ is given by $\| G\|_{E,[0,\tau]} = \sqrt{\lambda_1(Y(\tau))}$. Moreover, a unit-norm, worst-case disturbance $d_{wc}(t)$ for $t\in[0,\tau]$ is given by:
	\begin{align}
	\label{eq:wcdist}
	d_{wc}(t) = B(t)^\top \Phi(\tau,t)^\top C_E(\tau)^\top \frac{v_1}{\sqrt{\lambda_1(Y(\tau))}}
	\end{align}
	where $v_1$ is a unit-norm eigenvector associated with $\lambda_1(Y(\tau))$. 
\end{theorem}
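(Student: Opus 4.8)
The plan is to recognize the input-to-terminal-output map as a finite-rank bounded linear operator on $\mathcal{L}_2^{n_d}[0,\tau]$ and reduce the induced-norm computation to the largest eigenvalue of its $n_E\times n_E$ Gram matrix, which will turn out to be exactly $Y(\tau)$. First I would write, via the variation-of-constants formula with state-transition matrix $\Phi$ and $x(0)=0$, that $e_E(\tau)=C_E(\tau)x(\tau)=\int_0^\tau C_E(\tau)\,\Phi(\tau,t)\,B(t)\,d(t)\,dt=:(Ld)$, so that $L:\mathcal{L}_2^{n_d}[0,\tau]\to\R^{n_E}$ is the linear operator with kernel $K(t)=C_E(\tau)\,\Phi(\tau,t)\,B(t)$ and $\|G\|_{E,[0,\tau]}=\|L\|$. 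A short computation gives the adjoint $(L^\ast v)(t)=K(t)^\top v=B(t)^\top\Phi(\tau,t)^\top C_E(\tau)^\top v$ for $v\in\R^{n_E}$.

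Since $L$ has finite-dimensional range, $\|L\|^2=\|LL^\ast\|=\lambda_1(LL^\ast)$ because $LL^\ast$ is a symmetric positive semidefinite $n_E\times n_E$ matrix. Expanding, $LL^\ast=\int_0^\tau K(t)K(t)^\top dt=C_E(\tau)\,W(\tau)\,C_E(\tau)^\top$, where $W(\tau):=\int_0^\tau\Phi(\tau,t)B(t)B(t)^\top\Phi(\tau,t)^\top dt$ is the state controllability Gramian. The key identification is $W(\tau)=X(\tau)$: differentiating $W$ in the upper limit via Leibniz's rule and using $\partial_\tau\Phi(\tau,t)=A(\tau)\Phi(\tau,t)$ and $\Phi(\tau,\tau)=I$ shows that $W$ satisfies the LDE \eqref{eq:LDE1} with $W(0)=0$, so $W=X$ by uniqueness of solutions to the linear matrix ODE. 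Hence $LL^\ast=C_E(\tau)X(\tau)C_E(\tau)^\top=Y(\tau)$, which yields $\|G\|_{E,[0,\tau]}=\sqrt{\lambda_1(Y(\tau))}$.

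For the worst-case disturbance, let $v_1$ be a unit eigenvector of $Y(\tau)$ for $\lambda_1(Y(\tau))>0$ and set $d_{wc}=L^\ast v_1/\sqrt{\lambda_1(Y(\tau))}$; then $\|d_{wc}\|_{2,[0,\tau]}^2=\langle v_1,LL^\ast v_1\rangle/\lambda_1(Y(\tau))=1$, and $Ld_{wc}=Y(\tau)v_1/\sqrt{\lambda_1(Y(\tau))}=\sqrt{\lambda_1(Y(\tau))}\,v_1$, so $\|Ld_{wc}\|_2=\sqrt{\lambda_1(Y(\tau))}=\|G\|_{E,[0,\tau]}$. Writing $L^\ast v_1$ out with the kernel above recovers exactly the formula \eqref{eq:wcdist}, and it shows the supremum in the definition of the gain is attained.

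I expect no deep obstacle here; the points that need care are (i) the Leibniz differentiation establishing $W(\tau)=X(\tau)$ and the appeal to uniqueness for the LDE, and (ii) a brief caveat that \eqref{eq:wcdist} requires $\lambda_1(Y(\tau))>0$ — in the degenerate case $\lambda_1(Y(\tau))=0$ the gain is zero and the worst-case input is not uniquely meaningful. I would also state explicitly the standard fact $\|L\|^2=\lambda_{\max}(LL^\ast)$ for a bounded operator with finite-rank $LL^\ast$, since the rest of the argument rests on it.
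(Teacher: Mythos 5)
Your proposal is correct, and it reaches the result by a genuinely different (and somewhat cleaner) mechanism than the paper. The paper decomposes $\mathcal{L}_2^{n_d}[0,\tau]=\mathcal{N}(G)\oplus\mathcal{N}(G)^{\perp}$, uses $\mathcal{R}(G^\sim)=\mathcal{N}(G)^{\perp}$ (via closedness of the range of a finite-rank operator) to argue the supremum may be restricted to inputs of the form $d=G^\sim(w)$ with $w\in\R^{n_E}$, and then maximizes the generalized Rayleigh quotient $w^\top Y(\tau)^2 w / (w^\top Y(\tau) w)$, which requires the output-controllability assumption $Y(\tau)>0$ to perform the substitution $u=\Lambda^{1/2}V^\top w$. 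You instead invoke the operator-norm identity $\|L\|^2=\|LL^\ast\|=\lambda_{\max}(LL^\ast)$ directly for the finite-rank map $L$ with kernel $K(t)=C_E(\tau)\Phi(\tau,t)B(t)$, identify $LL^\ast=Y(\tau)$ by the Leibniz/uniqueness argument for the LDE (a step the paper states without derivation), and verify the maximizer explicitly. What each approach buys: the paper's route makes transparent the connection to Gramian-based minimum-energy control and exhibits why the optimization collapses onto the finite-dimensional subspace $\mathcal{R}(G^\sim)$; your route is shorter, establishes the norm formula $\|G\|_{E,[0,\tau]}=\sqrt{\lambda_1(Y(\tau))}$ without assuming $Y(\tau)>0$, and correctly isolates positivity of $\lambda_1(Y(\tau))$ as needed only for the worst-case input formula --- a caveat the paper handles by simply assuming output controllability. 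Both arguments are sound; yours would be a valid substitute proof.
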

\begin{proof}
	The proof is given in Appendix~\ref{sec:proof3}.  
\end{proof}
Theorem~\ref{thm:L2toE} shows that related worst-case disturbance $d_{wc}(t)$ for $t\in[0,\tau]$ can also be computed by simulating the adjoint dynamics in the direction of eigenvector $v_1$. The similar result related to the square root of the maximum eigenvalue of the output controllability Gramian appeared in~\cite{wilson1985hankel,lu1998comparison} for induced $\mathcal{L}_2$-to-$\mathcal{L}_\infty$ gain and in~\cite{corless1989improved, wilson1989convolution, rotea1993generalized} for generalized $H_2$ norm of LTI systems on infinite horizons.


\section{Numerical Examples}
\label{sec:ex}

\subsection{LTI System}
Consider SISO LTI systems of increasing system order as discussed in Example~\ref{ex:stateorder}. 
Each system is normalized so that infinite horizon $H_\infty$ norm is one. Algorithms~\ref{alg:power}, \ref{alg:bisectAlgo} and \ref{alg:combniedAlgo} are run to compute the induced $\mathcal{L}_2$ gain within $\epsilon_a = 0.01$ on a horizon $T=15$ sec. The power iteration only computes lower bounds. \figref{fig:compareSISOcompstudy} shows the computation times (averaged over $5$ random models) versus state dimension $n_x$.  The solid line denote the linear fit to the respective data. The proposed method is faster than RDE bisection and provides guaranteed upper/lower bounds (in comparison to the power iteration).
\begin{figure}[h] 
	\centering
	\includegraphics[width=0.70\linewidth]{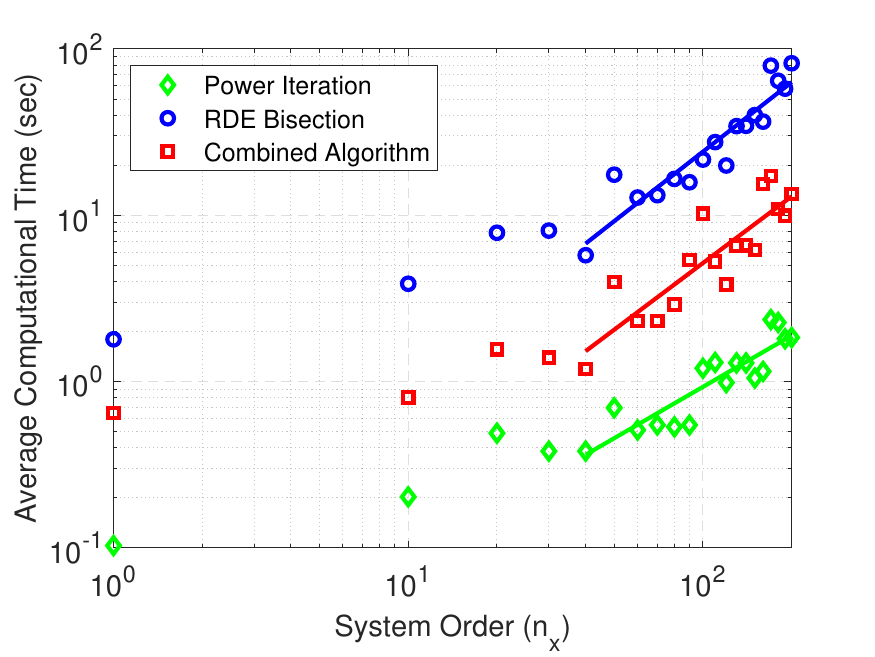}
	\caption{Average Comp. Time vs System Order}
	\label{fig:compareSISOcompstudy}
\end{figure}
\figref{fig:compareGains} shows the comparison of the bound computed using Algorithm~\ref{alg:power} and \ref{alg:bisectAlgo} for different models (sorted in increasing order of gain). It indicates that the RDE bisection method lower bound (blue circles) is always within the desired numerical tolerance $\epsilon_a$ (black line) from the upper bound (red squares). However, the power iteration lower bound (green diamonds) may or may not be within the bisection tolerance.  

\begin{figure}[h] 
	\centering
	\includegraphics[width=0.70\linewidth]{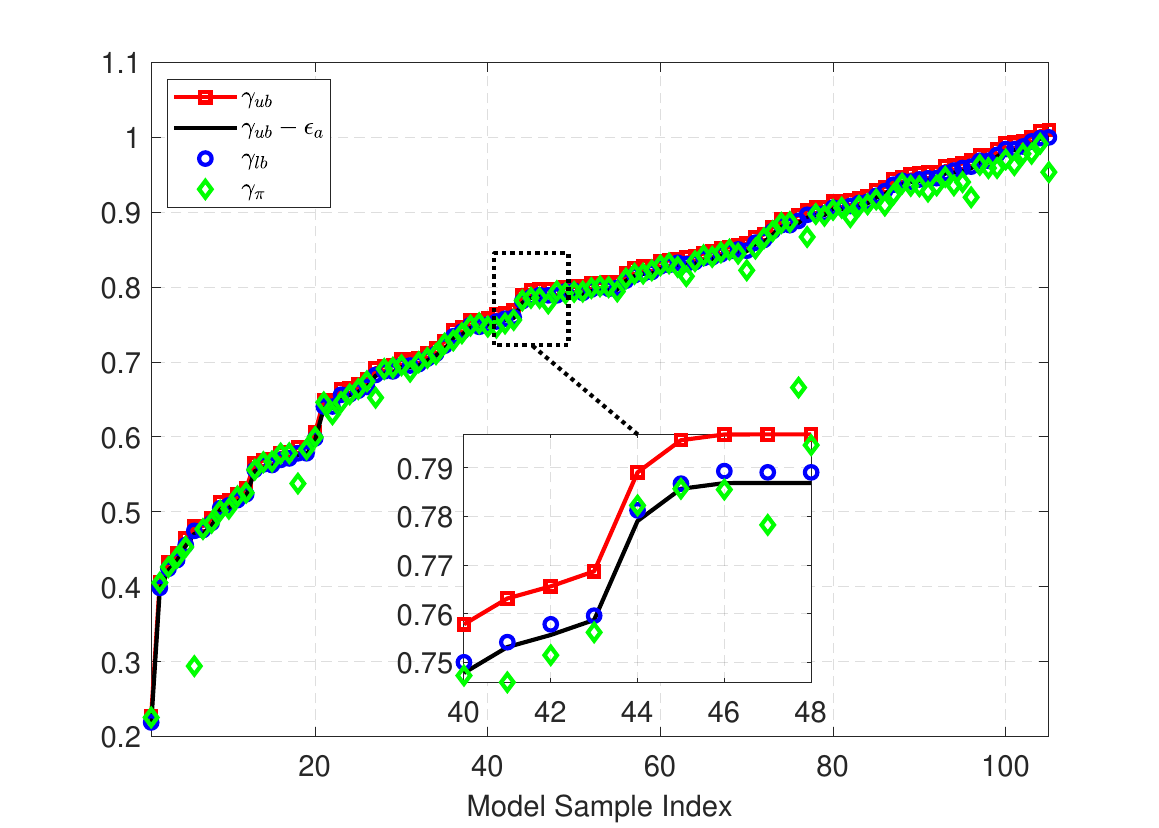}
	\caption{Comparison of Gain Computed Using Algorithm 1 and 2}
	\label{fig:compareGains}
\end{figure}

\subsection{LTV System}
Consider the Example $4.1$ in~\cite{imae1996h}. Let the LTV system be given by the following state-space matrices:
\begin{align*}
A(t) = \bmtx -1+sin(t) & 1\\0 & -4\emtx,\,\, B = C_I = \bmtx 1 & 0\\0 & 1\emtx,\,\, D_I = \bmtx 0 & 0\\0 & 0\emtx 
\end{align*}
The induced $\mathcal{L}_2$ gains are computed on the horizon of $T=10$ seconds using Algorithm~\ref{alg:power}, \ref{alg:bisectAlgo} and \ref{alg:combniedAlgo} with $\epsilon_a = 0.01$. Algorithm~\ref{alg:power} took $7$ power iterations to reach the $\gamma_\pi$ of $1.782$ in total $0.45$ seconds. Algorithm~\ref{alg:bisectAlgo} took $10$ bisections to compute the bounds of $[1.799, 1.809]$ in $4.75$ seconds. Algorithm~\ref{alg:combniedAlgo} took only two iterations to compute the same bound $[1.799, 1.809]$ in $1.63$ seconds. The unit-norm worst-case disturbances are shown in~\figref{fig:wcdist} which matches with Figure~$1$ in~\cite{imae1996h}.
\begin{figure}[h]
	\centering
	\includegraphics[width=0.70\linewidth]{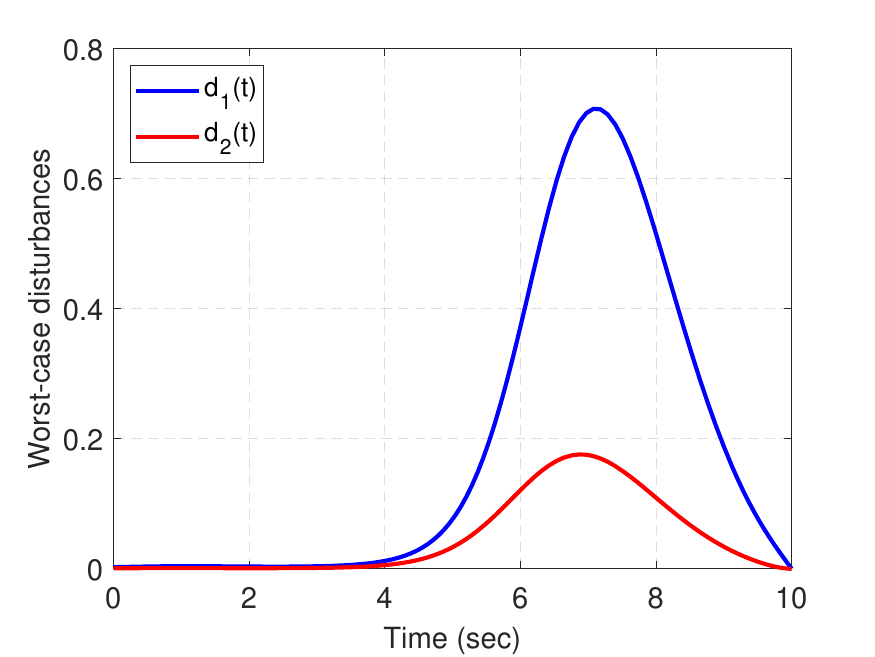}
	\caption{Worst-case Disturbances}
	\label{fig:wcdist}
\end{figure}

\subsection{Nonlinear System: Two-Link Robot}
Consider an example of a two-link robot arm as shown in the~\figref{fig:twoLinkRobot}. The mass and moment of inertia of the $i^{th}$ link are denoted by $m_i$ and $I_i$. The robot properties are $m_1=3kg$,
$m_2 = 2kg$, $l_1 = l_2= 0.3m$, $r_1=r_2 = 0.15m$,
$I_1= 0.09 kg\cdot m^2$, and $I_2= 0.06 kg\cdot m^2$. The nonlinear equations of motion \cite{murray1994mathematical} for the robot are given by:
\begin{align}
\begin{split}
\bmat{\alpha+ 2\beta\cos(\theta_2) & \delta + \beta \cos(\theta_2) \\
	\delta +  \beta \cos(\theta_2) & \delta}
\bmat{\ddot{\theta}_1 \\ \ddot{\theta}_2} +\bmat{-\beta \sin(\theta_2) \dot{\theta}_2 &
	-\beta \sin(\theta_2) (\dot{\theta}_1 + \dot{\theta}_2) \\
	\beta \sin(\theta_2) \dot{\theta}_1 & 0} \bmat{\dot{\theta}_1 \\
	\dot{\theta}_2} =\bmat{\tau_1 \\ \tau_2}  \label{eq:linkDyns}
\end{split}
\end{align}
\begin{align*}
\mbox{with } &\alpha := I_{1} + I_{2} + m_1r_1^2 + m_2(l_1^2 + r_2^2) = 0.4425    \, kg \cdot m^2\\
& \beta := m_2 l_1 r_2 = 0.09 \, kg \cdot m^2 \\
& \delta := I_{2} + m_2 r_2^2 = 0.105 \, kg \cdot m^2.
\end{align*}
The state and input are
$\eta :=[\theta_1  \ \theta_2  \ \dot{\theta}_1 \
\dot{\theta}_2]^\top$ and $\tau :=[\tau_1 \ \tau_2]^\top$,
where $\tau_i$ is the torque applied to the base of link $i$. A trajectory $\bar{\eta}$ of duration $5$ second was selected for the tip of the arm to follow. This trajectory is shown as a solid black line in \figref{fig:NomTraj}. 

\begin{figure}[h] 
	\centering
	\includegraphics[scale=0.35]{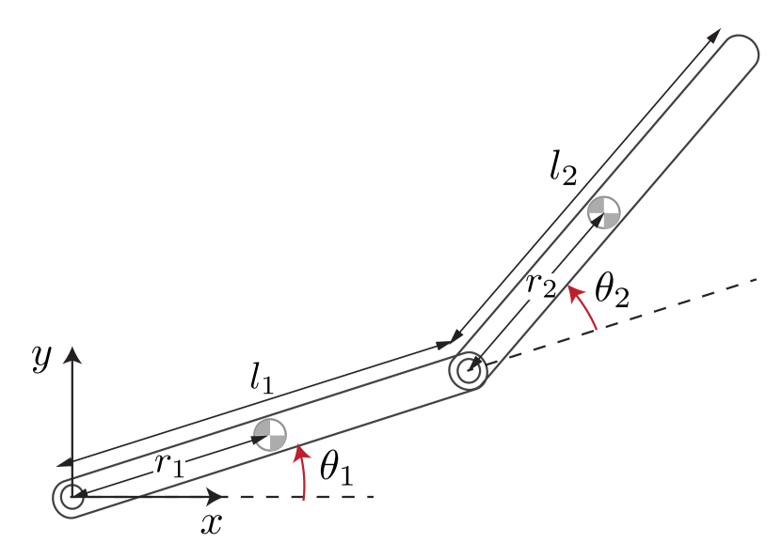}
	\caption{Two-link Planar Robot~\cite{murray1994mathematical}}
	\label{fig:twoLinkRobot}
\end{figure}

The equilibrium input torque $\bar{\tau}$ can be computed using inverse kinematics. The robot should track this trajectory in the presence of small torque disturbances $d$. The input torque vector is
$\tau = \bar{\tau} + u + d$ where $u$ is an additional control torque to reject the disturbances. The nonlinear dynamics~\eqref{eq:linkDyns} are linearized around the trajectory $(\bar{\eta}, \bar{\tau})$ to obtain an LTV system $G$:
\begin{align}
\dot{x}(t) = A(t) x(t) + B(t) \left( u(t) + d(t) \right)
\end{align}
where $x(t):=\eta(t)-\bar{\eta}(t)$ is the deviation from the equilibrium trajectory. This is an example of  ``gridded" LTV system which often arise from a linearization of a nonlinear dynamic model along a trajectory. Linear interpolation is used to approximate the system dynamics at any time during the integration. A finite horizon time-varying LQR controller is designed to reject input disturbances. The details for the control design can be found in~\cite{seiler2019finite,moore15}. We are interested in assessing the performance of the closed loop LTV system.

\begin{figure}[h]
	\centering
	\begin{minipage}{.5\textwidth}
		\centering
		\includegraphics[width=\linewidth]{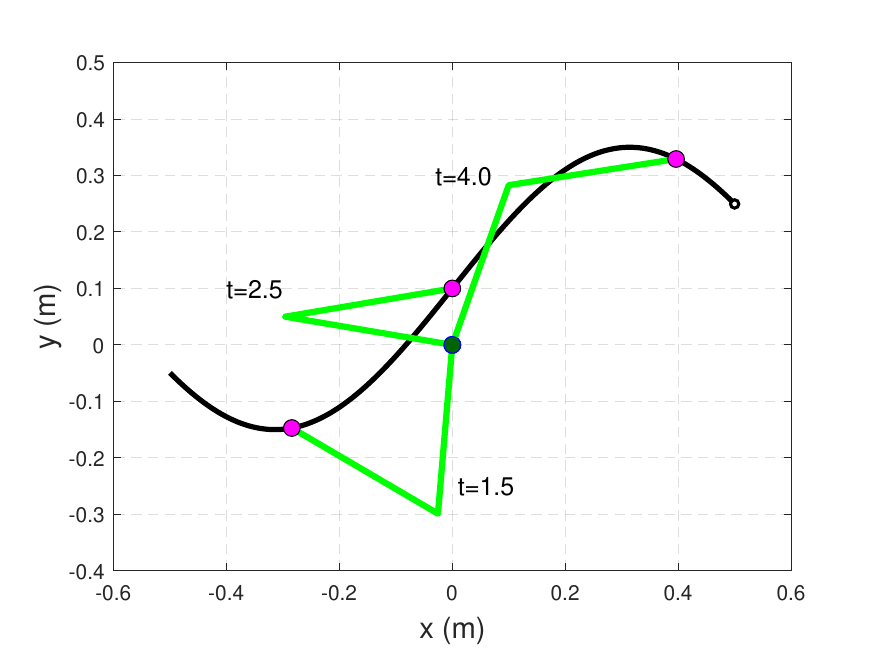}
		\caption{Nominal Trajectory with Snapshot Positions}
		\label{fig:NomTraj}
	\end{minipage}%
	\begin{minipage}{.5\textwidth}
		\centering
		\includegraphics[width=\linewidth]{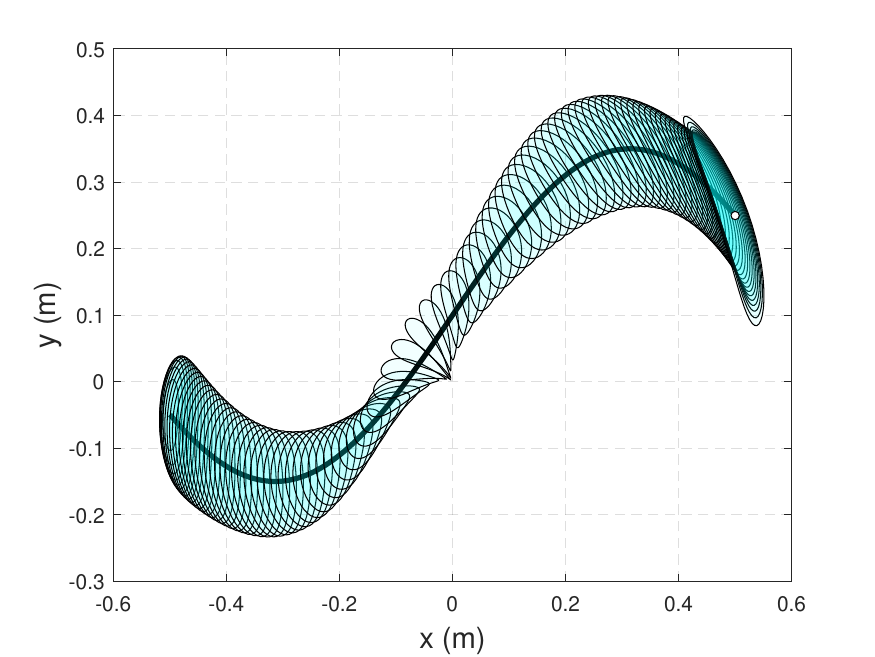}
		\caption{Euclidean Norm Bound Along the Trajectory}
		\label{fig:EuNormBound}
	\end{minipage}
\end{figure}

First, the induced $\mathcal{L}_2$ gain from disturbance input $d$ to all the linearized states $e_I = x$ is computed using Algorithm~\ref{alg:bisectAlgo} and \ref{alg:combniedAlgo} for comparison. The absolute tolerance of $\epsilon_a = 5 \times 10^{-3}$ was selected. Algorithm~\ref{alg:bisectAlgo} took $11.8$ seconds to compute the lower and upper bounds of $0.098$ and $0.102$. It took $8$ RDE bisections to achieve this accuracy. Algorithm~\ref{alg:combniedAlgo} returned the lower and upper bound of $0.096$ and $0.101$ in $5.4$ seconds. It took $1$ full iteration of proposed algorithm (i.e. $6$ power iterations and $1$ RDE integration). Next, consider $\mathcal{L}_2$-to-Euclidean gain analysis using the condition presented in Theorem~\ref{thm:L2toE}. This analysis is performed from disturbance input $d$ to the performance output $e_E = \bsmtx x_1\\ x_2 \esmtx$. It took $2.3$ seconds to compute the Euclidean norm bound along the entire trajectory for disturbance with $\|d\|_{2,[0,T]} = 5$. Using the trim values of $\theta_1$, $\theta_2$ and given robot parameters, the robot tip position in the Cartesian coordinate was computed. \figref{fig:EuNormBound} shows the cyan bound at $100$ equality spaced points around a nominal trajectory. 

Such preliminary analysis performed on a LTV system provides primary guarantees on performance. At any given time, the worst-case disturbance can also be computed using adjoint simulation. The impact of which can further be studied in the high fidelity nonlinear simulation. Moreover, this disturbance can be used to initialize the power iterations for the nonlinear model as in~\cite{tierno1997numerically}.  

\section{Conclusion}
\label{sec:conclusion}
A fast algorithm that uses complementary benefits of power iteration and RDE condition was proposed for computing finite horizon system norms. The effectiveness of this approach was demonstrated using numerical examples and computational studies. 

\appendices
\section{Compactness of $G^\sim G$}
\label{sec:compactness}
\begin{lem}
	\label{lem:compactness}
	Assume that $G$ has no feedthrough i.e. $D_I=0$, then $G$, $G^\sim$, $G^\sim G$ and $G G^\sim$ are compact operators.	
\end{lem}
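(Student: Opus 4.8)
The plan is to exhibit $G$ explicitly as the sum of a finite-rank operator and a Hilbert--Schmidt integral operator, both of which are compact, and then to propagate compactness to $G^\sim$, $G^\sim G$, and $G G^\sim$ by standard operator-theoretic facts.

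First I would use the variation-of-constants formula. Let $\Phi(t,\tau)$ denote the state transition matrix of $\dot x = A(t)\,x$, so that with $x(0)=0$ we have $x(t) = \int_0^t \Phi(t,\tau) B(\tau) d(\tau)\,d\tau$. Since $D_I=0$, the two output channels are $e_I(t) = C_I(t) x(t)$ and $e_E(T) = C_E(T) x(T)$. Hence $G d = (L d,\, K d)$, where $L:\mathcal{L}_2^{n_d}[0,T]\to\R^{n_E}$ is given by $L d = C_E(T)\int_0^T \Phi(T,\tau) B(\tau) d(\tau)\,d\tau$, and $K:\mathcal{L}_2^{n_d}[0,T]\to\mathcal{L}_2^{n_I}[0,T]$ is the integral operator $(K d)(t) = \int_0^T \kappa(t,\tau) d(\tau)\,d\tau$ with kernel $\kappa(t,\tau) = C_I(t)\Phi(t,\tau)B(\tau)$ for $\tau\le t$ and $\kappa(t,\tau)=0$ for $\tau>t$.

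Next I would argue compactness of each piece. The operator $L$ has range in the finite-dimensional space $\R^{n_E}$, so it is finite-rank, hence compact. For $K$, the key point is that $\kappa$ is a bounded matrix-valued function on the compact square $[0,T]\times[0,T]$: the matrices $A(\cdot),B(\cdot),C_I(\cdot)$ are piecewise continuous and bounded on $[0,T]$, and consequently $(t,\tau)\mapsto\Phi(t,\tau)$ is continuous and bounded on $[0,T]^2$. A bounded kernel on a set of finite measure is square integrable, so $\kappa\in\mathcal{L}_2([0,T]^2)$ and $K$ is a Hilbert--Schmidt operator, therefore compact. This is precisely where the hypothesis $D_I=0$ enters: a nonzero $D_I$ would add the multiplication operator $d\mapsto D_I d$, which is bounded but not compact on the infinite-dimensional space $\mathcal{L}_2^{n_I}[0,T]$. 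Since $G=(L,K)$ maps into the direct sum and a finite collection of compact operators into the summands assembles to a compact operator into the direct sum, $G$ is compact.

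Finally, I would invoke standard facts: the adjoint of a compact operator between Hilbert spaces is compact, so $G^\sim$ is compact; and the composition of a compact operator with a bounded operator (in either order) is compact, so $G^\sim G$ and $G G^\sim$ are compact. I do not expect a genuine obstacle here; the only step requiring minor care is the boundedness and continuity of the transition matrix $\Phi$ on $[0,T]^2$ under the piecewise-continuity assumption on the state matrices, which is exactly what makes the kernel $\kappa$ Hilbert--Schmidt.
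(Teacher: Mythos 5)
Your proposal is correct and follows essentially the same route as the paper: represent the input--output map with $x(0)=0$ and $D_I=0$ as an integral operator with kernel $C(t)\Phi(t,\tau)B(\tau)$, conclude it is Hilbert--Schmidt and hence compact, and then pass to $G^\sim$, $G^\sim G$, and $GG^\sim$ via the standard facts that adjoints and compositions of compact operators with bounded ones are compact. The only difference is cosmetic but in your favor: you split off the terminal map $d \mapsto e_E(T)$ as an explicitly finite-rank operator into $\R^{n_E}$ and verify square-integrability of the kernel directly from boundedness of $\Phi$ on $[0,T]^2$, whereas the paper folds both output channels into one kernel and cites the finite-horizon $H_2$-norm bound for the same conclusion.
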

\begin{proof}
	Let $\Phi(t,\tau)$ denote the unique state transition matrix of the unforced system $\dot{x}(t) = A(t) \, x(t)$ satisfying the following differential equation:
	\begin{align}
	\frac{d\Phi (t,\tau)}{dt} = A(t) \,\Phi(t,\tau), \hspace{0.2in} \Phi(\tau,\tau) = I_{n_x} 
	\end{align}
	If $D_I=0$ then $G$ can be represented by a matrix valued kernel function as 	 
	$e(t) = \int_0^t H(t,\tau) \,d(\tau) d\tau$, where $H(t,\tau) := C(t) \Phi(t,\tau) B(\tau)$ is an impulse response matrix. This impulse response matrix satisfies:
	\begin{align}
	\int_0^T \int_0^T trace\left[H(t,\tau) H(t,\tau)^\top\right] \, dt\, d\tau < \infty
	\end{align}	
	This bound follows from Theorem $3.3.1$ in~\cite{green2012linear} and the connection to the finite-horizon $H_2$ norm of $G$. It follows that $G$ is a Hilbert-Schmidt operator and hence compact (Theorem $8.8$ of~\cite{young1988introduction}).  Finally, compactness of $G$ implies that $G^\sim$, $G^\sim G$, and $GG^\sim$ are all compact (Theorems $4.8$ \& $4.10$ in~\cite{kato2013perturbation}).
\end{proof}

\section{Proof of Theorem~2}
\label{sec:proof2}

\noindent\textbf{Theorem 2.} \textit{Assume $D_I=0$ so that $G^\sim G$ is compact. Further assume that the dominant eigenvalue has multiplicity $m$ and $d^{(1)}$ satisfies $\langle d^{(1)}, \psi_k \rangle \ne 0$ for some $k\in \{1,\ldots,m\}$. Then $\gamma^{(i+1)} \ge \gamma^{(i)}$, $\forall i \geq 1$ and as $i \rightarrow \infty$ we have:
\begin{enumerate}[label=(\alph*)]
	\item $d^{(i)} \rightarrow$ span of $\{\psi_1,\ldots,\psi_m\}$
	\item $\gamma^{(i)} \rightarrow \|G\|_{[0,T]}^2$
\end{enumerate}	}
\begin{proof}	
	By Eq.~\eqref{eq:infsum}, the first iteration $r^{(1)} = G^\sim G(d^{(1)})$ yields $r^{(1)} = \sum_{k = 1}^{\infty} \lambda_k \alpha_k \psi_k$ where $\alpha_k := \langle d^{(1)}, \psi_k \rangle$. The gain $\gamma^{(1)}$ can be obtained as:
	\begin{align}
	\gamma^{(1)} &= \|r^{(1)}\|_{2,[0,T]}\nonumber\\ 
	&= \sqrt{\langle r^{(1)}, r^{(1)} \rangle}\nonumber\\
	&= \sqrt{\bigg\langle \sum_{k = 1}^{\infty} \lambda_k \alpha_k \psi_k, \sum_{j = 1}^{\infty} \lambda_j \alpha_j \psi_j\bigg\rangle}\nonumber\\
	&= \sqrt{\sum_{k = 1}^{\infty} \sum_{j = 1}^{\infty} \lambda_k \lambda_j \alpha_k \alpha_j  \langle \psi_k,  \psi_j\rangle}
	\end{align}
	Note $\langle \psi_k, \psi_j \rangle = 0$ for $k\neq j$ and $\langle \psi_k, \psi_j \rangle = 1$ for $k=j$, i.e.
	\begin{align}
	\gamma^{(1)} = \sqrt{\sum_{k = 1}^{\infty} (\lambda_k \alpha_k)^2}
	\end{align}
	The next disturbance $d^{(2)}$ can be obtained by:
	\begin{align}
	d^{(2)} = \frac{r^{(1)}}{\gamma^{(1)}} = \frac{1}{\gamma^{(1)}} \sum_{k = 1}^{\infty} \lambda_k \alpha_k \psi_k
	\end{align}
	Another application of $G^\sim G$ yields $r^{(2)} = G^\sim G(d^{(2)})$, i.e.
	\begin{align}			
	r^{(2)} &= \sum_{k = 1}^{\infty} \lambda_k \langle d^{(2)}, \psi_k \rangle \psi_k\nonumber\\
	&= \frac{1}{\gamma^{(1)}} \sum_{k = 1}^{\infty} \lambda_k \bigg\langle \sum_{j = 1}^{\infty} \lambda_j \alpha_j \psi_j, \psi_k \bigg\rangle \psi_k\nonumber\\
	&= \frac{1}{\gamma^{(1)}} \sum_{k = 1}^{\infty} \sum_{j = 1}^{\infty}  \lambda_k \lambda_j \alpha_j\langle \psi_j, \psi_k \rangle \psi_k\nonumber\\
	&= \frac{1}{\gamma^{(1)}} \sum_{k = 1}^{\infty} \lambda_k^2 \alpha_k \psi_k
	\end{align}
	Define the constant $c^{(i)} := 1/(\gamma^{(i-1)} \hdots \gamma^{(2)} \gamma^{(1)})$ and repeat above steps to obtain the following expressions for the $i^{th}$ iteration as:
	\begin{align}
	\label{eq:rgamma}
	r^{(i)} = c^{(i)} \, \sum_{k = 1}^{\infty} \lambda_k^i \alpha_k \psi_k, \hspace{0.3in}	
	\gamma^{(i)} = c^{(i)}\, \sqrt{\sum_{k=1}^\infty (\lambda_k^i \alpha_k)^2}
	\end{align}
	The unit-norm disturbance at the $i^{th}$ iteration is given by:
	\begin{align}
	\label{eq:dith}
	d^{(i)} = \frac{r^{(i-1)}}{\gamma^{(i-1)}} = c^{(i)} \, \sum_{k = 1}^{\infty} \lambda_k^{i-1} \alpha_k \psi_k
	\end{align}
	Since the dominant eigenvalue $\lambda_1$ has multiplicity $m$ we have:
	\begin{align}
	\label{eq:dist}
	d^{(i)} = c^{(i)}\, \lambda_1^{i-1}\left[ \sum_{k = 1}^{m} \alpha_k \psi_k + \sum_{k = m+1}^{\infty} \bigg(\frac{\lambda_k}{\lambda_1}\bigg)^{i-1} \alpha_k \psi_k \right]
	\end{align}
	As $i\rightarrow \infty$ the second term in the above sum $\rightarrow 0$ due to $\frac{\lambda_k}{\lambda_1} < 1$. Moreover, the assumption $\alpha_k \neq 0$ for some $k = 1, 2, \hdots, m$ ensures that $d^{(i)}$ converges to the subspace spanned by dominant eigenvectors. This assumption holds with probability 1 if $d^{(1)}$ is chosen randomly. Note that the convergence rate depends on ratio $\frac{\lambda_{m+1}}{\lambda_1}$.	
	
	Next, use the expression for $\gamma^{(i)}$ to write:
	\begin{align}
	\label{eq:gammai}
	\gamma^{(i)} = c^{(i)}\,\lambda_1^i \, \sqrt{\sum_{k=1}^m \alpha_k^2 + \sum_{k=m+1}^\infty \bigg(\frac{\lambda_k}{\lambda_1}\bigg)^{2i} \alpha_k^2}
	\end{align}
	As $i\rightarrow \infty$, the second term inside the square root $\rightarrow 0$. Moreover, the disturbance is normalized at each iterate, i.e. $\|d^{(i)}\|_{2,[0,T]}=1$. It follows from Eq.~\eqref{eq:dist} that: 
	\begin{align}
	\label{eq:eq29}
	c^{(i)}\, \lambda_1^{i-1} \sqrt{\sum_{k = 1}^{m} \alpha_k^2} \rightarrow 1 \mbox{ as } i \rightarrow \infty
	\end{align}
	Combine \eqref{eq:eq29} and \eqref{eq:gammai} to conclude $\gamma^{(i)}\rightarrow \lambda_1$. Theorem $6.10$ of~\cite{hunter2001applied} implies that induced norm of $G$ is equal to $\sqrt{\lambda_1(G^\sim G)}$. Thus, $\gamma^{(i)} \rightarrow \|G\|_{[0,T]}^2$ as $i \rightarrow \infty$.
	
	Finally, use Eq.~\eqref{eq:rgamma} to write the ratio of gain for two subsequent iterations as:
	\begin{align}
	\label{eq:relation}
	\frac{\gamma^{(i+1)}}{\gamma^{(i)}} = \frac{c^{(i+1)}\sqrt{\sum_{k=1}^\infty (\lambda_k^{i+1} \alpha_k)^2}}{c^{(i)} \sqrt{\sum_{k=1}^\infty (\lambda_k^{i} \alpha_k)^2}} = \frac{\sqrt{\sum_{k=1}^\infty (\lambda_k^{i+1} \alpha_k)^2}}{c^{(i)} \sum_{k=1}^\infty (\lambda_k^{i} \alpha_k)^2}
	\end{align}	
	Define $f_k := \lambda_k^{i+1} \alpha_k$ and $g_k:=\lambda_k^{i-1} \alpha_k$ to obtain:
	\begin{align}
	\label{eq:holder}
	\begin{split}
	c^{(i)} \sum_{k=1}^\infty (\lambda_k^{i} \alpha_k)^2 &= c^{(i)} \sum_{k=1}^\infty f_k\, g_k \leq c^{(i)} \|f\|_2 \|g\|_2 \hspace{0.1in} (\mbox{H\"{o}lder's Inequality})
	\end{split}
	\end{align}
	It follows from Eq.~\eqref{eq:dith} that $\|d^{(i)}\|_{2,[0,T]} = c^{(i)} \|g\|_2$ and hence $c^{(i)} \|g\|_2=1$ as the disturbances have unit norm.	It follows from \eqref{eq:relation} and \eqref{eq:holder} that $\gamma^{(i+1)} \ge \gamma^{(i)}$, $\forall i \geq 1$. 
\end{proof}
\begin{eremark}
	For the LTI systems, compactness of $G$ (and hence $G^\sim G$) can be used to arrive a finite-dimensional eigenvalue problem, as in the periodic sampled-data systems frequency-response literature~\cite{yamamoto1996frequency}, \cite{ito2001bisection},\cite{dullerud1999computing}.
\end{eremark}

\section{Proof of Theorem~3}
\label{sec:proof3}

\noindent\textbf{Theorem 3.} \textit{Let $Y(t):=C_E(t)X(t)C_E(t)^\top$, $\forall t\in[0,T]$ be the output controllability Gramian with largest eigenvalue denoted as $\lambda_1(Y(t))$. The finite horizon induced $\mathcal{L}_2$-to-Euclidean gain of $G$ for any horizon $\tau \in [0,T]$ is given by $\| G\|_{E,[0,\tau]} = \sqrt{\lambda_1(Y(\tau))}$. Moreover, a unit-norm, worst-case disturbance $d_{wc}(t)$ for $t\in[0,\tau]$ is given by:
\begin{align*}
d_{wc}(t) = B(t)^\top \Phi(\tau,t)^\top C_E(\tau)^\top \frac{v_1}{\sqrt{\lambda_1(Y(\tau))}}
\end{align*}
where $v_1$ is a unit-norm eigenvector associated with $\lambda_1(Y(\tau))$.}
\begin{proof}	
	The proof is similar to existing results on Gramian-based minimum
	energy control (Theorem~$1$, Section~$22$ of~\cite{brockett2015finite}).
	The results in~\cite{brockett2015finite} provide a condition for transferring the state from
	a given initial state $x(0)=x_0$ to a final state
	$x(T)=x_T$ using least amount of control energy. These results can be
	used to determine the input of unit norm that maximizes the Euclidean
	norm of the final state starting from zero initial conditions. The
	proof is given below for completeness.
	
	Let $G:\mathcal{L}^{n_d}_{2}[0,\tau] \rightarrow \R^{n_E}$ be a given bounded linear operator with the adjoint $G^\sim:\R^{n_E} \rightarrow \mathcal{L}^{n_d}_{2}[0,\tau]$ for any $\tau \in [0,T]$. The notation $\mathcal{R}(G)$ and $\mathcal{N}(G)$ are used to denote the range and null space of $G$ respectively. We know that $G$ is a finite rank operator because the $\mathcal{R}(G)$ is finite-dimensional. Moreover, every bounded finite rank operator is compact (Theorem $8.1-4$ of \cite{kreyszig1978introductory}). Assume $G$ is output-controllable, thus $\mathcal{R}(G) \equiv \R^{n_E}$ which is closed. This implies  $\mathcal{R}(G^\sim)$ is also closed (Chapter $4$, Theorem $5.13$ in \cite{kato2013perturbation}). In this case we have:
	\begin{align}
	\label{eq:rangenull}
	\mathcal{R}(G^\sim) = \mathcal{N}(G)^{\perp}
	\end{align}
	where $\mathcal{N}(G)^{\perp}$ denote an orthogonal complement of $\mathcal{N}(G)$. Note that $\mathcal{N}(G)$ is a closed linear subspace of the Hilbert space $\mathcal{L}^{n_d}_{2}[0,\tau]$ (Theorem $1.18$ of~\cite{rudin1973functional}). Thus, we can decompose the Hilbert space $\mathcal{L}^{n_d}_{2}[0,\tau]$ as $\mathcal{N}(G)^{\perp} \oplus \mathcal{N}(G)$ (Section $3.4$ of~\cite{luenberger1997optimization}, Section $5.1$ of~\cite{kato2013perturbation}). This implies that any $d \in \mathcal{L}^{n_d}_{2}[0,\tau]$ can be decomposed as $d = d_1 + d_2$, where $d_1 \in \mathcal{N}(G)^{\perp}$, $d_2 \in \mathcal{N}(G)$ and the inner product $\langle d_1, d_2 \rangle$ = 0. Moreover, since $\mathcal{N}(G)$ is a null space of $G$, we have $G(d_2) = 0$. Use the the linearity property of an operator i.e. $G(d_1 + d_2) = G(d_1) + G(d_2)$ to rewrite the square of the induced $\mathcal{L}_2$-to-Euclidean cost as follows:
	\begin{align}
	J(d) &:=  \frac{\| e(\tau)\|_2^2}{\|d\|^2_{2,[0,\tau]}} = \frac{e(\tau)^\top e(\tau)}{\langle d, d \rangle} = \frac{G(d)^\top G(d)}{\langle d, d \rangle} =  \frac{G(d_1)^\top G(d_1)}{\langle d_1, d_1 \rangle + \langle d_2, d_2 \rangle } \leq \frac{G(d_1)^\top G(d_1)}{\langle d_1, d_1 \rangle}
\end{align}
	This means $J(d) \leq J(d_1)$. Thus, the optimization over infinite dimensional Hilbert space $\mathcal{L}^{n_d}_{2}[0,\tau]$ is equivalent to the optimization over $ \mathcal{N}(G)^{\perp}$. Use Eq.~\eqref{eq:rangenull} to rewrite the problem as:
	\begin{align}
	\|G\|^2_{E,[0,\tau]} &= \sup_{d\in\mathcal{L}^{n_d}_{2}[0,\tau]} J(d)\nonumber\\
	&= \sup_{d \in \mathcal{N}(G)^\perp} J(d) = \sup_{d \in \mathcal{R}(G^\sim)} J(d)
	\end{align}
	Define $w:=e(\tau)$ to rewrite the optimization problem over the finite dimensional space $\R^{n_E}$ as:	
	\begin{align}
	\|G\|^2_{E,[0,\tau]} = \sup_{w \in \R^{n_E}} \frac{\langle GG^\sim(w), GG^\sim(w)\rangle}{\langle G^\sim (w), G^\sim(w) \rangle} \hspace{0.13in} \label{eq:prob1}
	\end{align}
	Using $G^\sim(w) = B(t)^\top \Phi(\tau,t)^\top C_E(\tau)^\top w$, the composition $GG^\sim(w)$ can be written as:
	\begin{align}
	GG^\sim(w) = C_E(\tau)X(\tau)C_E(\tau)^\top\, w = Y(\tau) \,w
	\end{align}		
	where $X(\tau) = \int_{0}^{\tau} \Phi(\tau,s) B(s) \, B(s)^\top \Phi(\tau,s)^\top ds$ is a solution to LDE~\eqref{eq:LDE1} at time $\tau$. The inner product in the denominator of Eq.~\eqref{eq:prob1} can be written as:
	\begin{align}
	\langle G^\sim (w), G^\sim(w) \rangle = \langle w, GG^\sim(w) \rangle = w^\top \, Y(\tau) \,w
	\end{align}
	Thus, the optimal cost from Eq.~\eqref{eq:prob1} can be written as:
	\begin{align}
	\label{eq:L2toECost}
	\|G\|^2_{E,[0,\tau]} &= \sup_{w\in\R^{n_E}} \frac{w^\top Y(\tau)^\top Y(\tau) w}{w^\top Y(\tau) w}
	\end{align}
	where $Y(\tau) \in \Sm^{n_E}$ is diagonalizable and system is assumed to be output controllable i.e. $Y(\tau) > 0$. The eigenvalue decomposition of $Y(\tau)$ can be obtained as $V \Lambda V^\top$ where $\Lambda > 0$ is a diagonal matrix of eigenvalues and $V$ is a unitary matrix whose columns are orthonormal eigenvectors $v_i$, for $i = 1,2,\hdots,n_E$. Substitute this relation in Eq.~\eqref{eq:L2toECost} and use $V^\top V = I$ to obtain:
	\begin{align}
	\|G\|^2_{E,[0,\tau]} &= \sup_{w\in\R^{n_E}} \frac{w^\top V \Lambda \Lambda V^\top w}{w^\top V \Lambda V^\top w}
	\end{align}
	Define $u:=\Lambda^{\frac{1}{2}} V^\top w$ to rewrite the optimization problem as:
	\begin{align}
	\|G\|^2_{E,[0,\tau]} &= \sup_{u\in\R^{n_E}} \frac{u^\top \Lambda u}{u^\top u}
	\end{align}
	This is an eigenvalue problem and the cost is maximized with the optimal solution $u^* = \bsmtx 1 & 0 & \hdots & 0\esmtx^\top$ where $u^*$ is an eigenvector of the diagonal matrix $\Lambda$ corresponding to the maximum eigenvalue $\lambda_1$, which is also the maximum eigenvalue of the matrix $Y(\tau)$. Thus $\|G\|^2_{E,[0,\tau]} = \lambda_1 (Y(\tau))$. Taking square root implies $\|G\|_{E,[0,\tau]} = \sqrt{\lambda_1(Y(\tau))}$. The worst-case disturbance in $\mathcal{R}(G^\sim)$ is obtained by simulating the adjoint dynamics with $w^* =  V \Lambda^{-\frac{1}{2}} u^* = \frac{v_1}{\sqrt{\lambda_1}}$, i.e. $d_{wc}(t) = B(t)^\top \Phi(\tau,t)^\top C_E(\tau)^\top \frac{v_1}{\sqrt{\lambda_1(Y(\tau))}}$.
\end{proof}

\section*{Acknowledgment}
We would like to thank our collaborators Prof. Andrew Packard and graduate students Kate Schweidel, Emmanuel Sin and Alex Devonport at the University of California, Berkeley for their valuable feedback. We also thank Dr. Douglas Philbrick from Naval Air Warfare Center Weapons Division (NAWCWD) at China Lake for helpful discussions.

\bibliographystyle{ieeetr}
\bibliography{References}

\end{document}